\def\T{{\hbox{\scriptsize{\rm T}}}}
\def\tinyT{{\hbox{\tiny{\rm T}}}}
\def\epsilon{\varepsilon}
\def\phi{\varphi}
\def\bigoh{\mathcal{O}}
\def\th{{\rm th}}
\def\ith{{\it th}}
\def\st{{\rm st}}
\def\ist{{\it st}}
\def\Id{{\bf 1}}
\def\0s{{\bf 0}}
\def\registered{$^{\hbox{\ooalign{\hfil\raise .20ex\hbox{\textbf{\tiny R}}\hfil\crcr\mathhexbox20C}}}$}
\newtheorem{observe}[theorem]{Observation}
\newtheorem{remark1}[theorem]{Remark}
\newenvironment{observation}{\begin{observe} \rm}{\end{observe}}
\newenvironment{remark}{\begin{remark1} \rm}{\end{remark1}}
\title{A randomized algorithm for\\principal component analysis}
\author{Vladimir Rokhlin\thanks{Departments of Computer Science, Mathematics,
and Physics, Yale University, New Haven, CT 06511;
supported in part by DARPA/AFOSR Grant FA9550-07-1-0541.} \and
Arthur Szlam\thanks{Department of Mathematics, UCLA, Los Angeles, CA 90095-1555;
supported in part by NSF Grant DMS-0811203 ({\tt aszlam@math.ucla.edu}).} \and
Mark Tygert\thanks{Department of Mathematics, UCLA, Los Angeles, CA 90095-1555
({\tt tygert@aya.yale.edu}).}
}
\begin{document}

\maketitle

\begin{abstract}
Principal component analysis (PCA) requires the computation
of a low-rank approximation to a matrix containing the data being analyzed.
In many applications of PCA, the best possible accuracy
of any rank-deficient approximation is at most a few digits
(measured in the spectral norm,
relative to the spectral norm of the matrix being approximated).
In such circumstances, efficient algorithms have not come
with guarantees of good accuracy,
unless one or both dimensions of the matrix being approximated are small.
We describe an efficient algorithm for the low-rank approximation of matrices
that produces accuracy very close to the best possible,
for matrices of arbitrary sizes.
We illustrate our theoretical results via several numerical examples.
\end{abstract}

\begin{keywords}
PCA, singular value decomposition, SVD, low rank, Lanczos, power
\end{keywords}

\begin{AMS}
65F15, 65C60, 68W20
\end{AMS}

\pagestyle{myheadings}
\thispagestyle{plain}
\markboth{ROKHLIN, SZLAM, AND TYGERT}{A RANDOMIZED ALGORITHM FOR PCA}

\section{Introduction}

Principal component analysis\,(PCA)\,is among the most\,widely used techniques
in statistics, data analysis, and data mining.
PCA is the basis of many machine learning methods,
including the latent semantic analysis
of large databases of text and HTML documents described
in~\cite{deerwester-dumais-furnas-landauer-harshman}.
Computationally, PCA amounts to the low-rank approximation of a matrix
containing the data being analyzed.
The present article describes an algorithm
for the low-rank approximation of matrices, suitable for PCA.
This paper demonstrates both theoretically and via numerical examples
that the algorithm efficiently produces low-rank approximations
whose accuracies are very close to the best possible.

The canonical construction of the best possible rank-$k$ approximation
to a real $m \times n$ matrix $A$ uses the singular value decomposition (SVD)
of $A$,
\begin{equation}
\label{full_svd}
A = U \, \Sigma \, V^\T,
\end{equation}
where $U$ is a real unitary $m \times m$ matrix,
$V$ is a real unitary $n \times n$ matrix,
and $\Sigma$ is a real $m \times n$ matrix whose only nonzero entries
appear in nonincreasing order on the diagonal and are nonnegative.
The diagonal entries $\sigma_1$,~$\sigma_2$,
\dots, $\sigma_{\min(m,n)-1}$,~$\sigma_{\min(m,n)}$
of $\Sigma$ are known as the singular values of $A$.
The best rank-$k$ approximation to $A$, with $k < m$ and $k < n$, is
\begin{equation}
\label{low_rank_approx}
A \approx \tilde{U} \, \tilde{\Sigma} \, \tilde{V}^\T,
\end{equation}
where $\tilde{U}$ is the leftmost $m \times k$ block of $U$,
$\tilde{V}$ is the leftmost $n \times k$ block of $V$,
and $\tilde{\Sigma}$ is the $k \times k$ matrix
whose only nonzero entries appear in nonincreasing order on the diagonal
and are the $k$ greatest singular values of $A$.
This approximation is ``best'' in the sense that
the spectral norm $\| A - B \|$ of the difference between $A$
and a rank-$k$ matrix $B$ is minimal
for $B = \tilde{U} \, \tilde{\Sigma} \, \tilde{V}^\T$.
In fact,
\begin{equation}
\| A - \tilde{U} \, \tilde{\Sigma} \, \tilde{V}^\T \| = \sigma_{k+1},
\end{equation}
where $\sigma_{k+1}$ is the $(k+1)^\st$ greatest singular value of $A$.
For more information about the SVD, see, for example,
Chapter~8 in~\cite{golub-van_loan}.

For definiteness, let us assume that $m \le n$
and that $A$ is an arbitrary (dense) real $m \times n$ matrix.
To compute a rank-$k$ approximation to $A$,
one might form the matrices $U$, $\Sigma$, and $V$ in~(\ref{full_svd}),
and then use them to construct $\tilde{U}$, $\tilde{\Sigma}$, and $\tilde{V}$
in~(\ref{low_rank_approx}).
However, even computing just $\Sigma$, the leftmost $m$ columns of $U$,
and the leftmost $m$ columns of $V$ requires at least
$\bigoh(n m^2)$ floating-point operations (flops) using any
of the standard algorithms
(see, for example, Chapter~8 in~\cite{golub-van_loan}).
Alternatively, one might use pivoted $QR$-decomposition algorithms,
which require $\bigoh(nmk)$ flops
and typically produce a rank-$k$ approximation $B$ to $A$ such that
\begin{equation}
\label{gu_bound}
\| A - B \| \le 10 \sqrt{m} \; \sigma_{k+1},
\end{equation}
where $\|A-B\|$ is the spectral norm of $A-B$,
and $\sigma_{k+1}$ is the $(k+1)^\st$ greatest singular value of $A$
(see, for example, Chapter~5 in~\cite{golub-van_loan}).
Furthermore, the algorithms of~\cite{gu-eisenstat} require only
about $\bigoh(nmk)$ flops to produce a rank-$k$ approximation that
(unlike an approximation produced by a pivoted $QR$-decomposition)
has been guaranteed to satisfy a bound nearly as strong as~(\ref{gu_bound}).

While the accuracy in~(\ref{gu_bound}) is sufficient
for many applications of low-rank approximation,
PCA often involves $m \ge$ 10,000,
and a ``signal-to-noise ratio'' $\sigma_1/\sigma_{k+1} \le 100$,
where $\sigma_1 = \|A\|$ is the greatest singular value of $A$,
and $\sigma_{k+1}$ is the $(k+1)^\st$ greatest.
Moreover, the singular values $\le \sigma_{k+1}$
often arise from noise in the process generating the data in $A$,
making the singular values of $A$ decay so slowly that
$\sigma_m \ge \sigma_{k+1}/10$.
When $m \ge$ 10,000, $\sigma_1/\sigma_{k+1} \le 100$,
and $\sigma_m \ge \sigma_{k+1}/10$, the rank-$k$ approximation $B$ produced
by a pivoted $QR$-decomposition algorithm
typically satisfies $\| A - B \| \sim \| A \|$
--- the ``approximation'' $B$ is effectively unrelated
to the matrix $A$ being approximated!
For large matrices whose ``signal-to-noise ratio''
$\sigma_1/\sigma_{k+1}$ is less than 10,000,
the $\sqrt{m}$ factor in~(\ref{gu_bound}) may be unacceptable.
Now, pivoted $QR$-decomposition algorithms are not the only algorithms
which can compute a rank-$k$ approximation using $\bigoh(nmk)$ flops.
However, other algorithms, such as those of
\cite{achlioptas-mcsherry0}, \cite{achlioptas-mcsherry}, \cite{chan-hansen},
\cite{clarkson-woodruff}, \cite{deshpande-rademacher-vempala-wang},
\cite{deshpande-vempala}, \cite{drineas-drinea-huggins},
\cite{drineas-kannan-mahoney2}, \cite{drineas-kannan-mahoney3},
\cite{drineas-mahoney-muthukrishnan1}, \cite{drineas-mahoney-muthukrishnan2},
\cite{friedland-kaveh-niknejad-zare}, \cite{frieze-kannan},
\cite{frieze-kannan-vempala0}, \cite{frieze-kannan-vempala},
\cite{goreinov-tyrtyshnikov}, \cite{goreinov-tyrtyshnikov-zamarashkin2},
\cite{goreinov-tyrtyshnikov-zamarashkin1}, \cite{gu-eisenstat},
\cite{har-peled},
\cite{liberty-woolfe-martinsson-rokhlin-tygert}, \cite{mahoney-drineas},
\cite{papadimitriou-raghavan-tamaki-vempala},
\cite{sarlos3}, \cite{sarlos4}, \cite{sun-xie-zhang-faloutsos},
\cite{tyrtyshnikov}, and~\cite{woolfe-liberty-rokhlin-tygert},
also yield accuracies involving factors of at least $\sqrt{m}$
when the singular values $\sigma_{k+1}$, $\sigma_{k+2}$, $\sigma_{k+3}$, \dots\
of $A$ decay slowly.
(The decay is rather slow if, for example,
$\sigma_{k+j} \sim j^\alpha \, \sigma_{k+1}$
for $j = 1$,~$2$,~$3$, \dots, with $-1/2 < \alpha \le 0$.
Many of these other algorithms are designed to produce approximations
having special properties not treated in the present paper,
and their spectral-norm accuracy is good when the singular values decay
sufficiently fast. Fairly recent surveys of algorithms
for low-rank approximation are available in~\cite{sarlos3}, \cite{sarlos4},
and~\cite{liberty-woolfe-martinsson-rokhlin-tygert}.)

The algorithm described in the present paper produces
a rank-$k$ approximation $B$ to $A$ such that
\begin{equation}
\label{very_rough}
\| A - B \| \le C \, m^{1/(4i+2)} \, \sigma_{k+1}
\end{equation}
with very high probability (typically $1-10^{-15}$, independent of $A$,
with the choice of parameters from Remark~\ref{par_remark} below),
where $\|A-B\|$ is the spectral norm of $A-B$,
$i$ is a nonnegative integer specified by the user,
$\sigma_{k+1}$ is the $(k+1)^\st$ greatest singular value of $A$,
and $C$ is a constant independent of $A$
that theoretically may depend on the parameters of the algorithm.
(Numerical evidence such as that in Section~\ref{numerical}
suggests at the very least that $C < 10$;
(\ref{explicit_eval}) and~(\ref{the_point})
in Section~\ref{algorithm} provide more complicated theoretical bounds on $C$.)
The algorithm requires $\bigoh(nmki)$ floating-point operations when $i>0$.
In many applications of PCA, $i = 1$ or $i = 2$ is sufficient,
and the algorithm then requires only $\bigoh(nmk)$ flops.
The algorithm provides the rank-$k$ approximation $B$ in the form of an SVD,
outputting three matrices, $\tilde{U}$, $\tilde{\Sigma}$, and $\tilde{V}$,
such that $B = \tilde{U} \, \tilde{\Sigma} \, \tilde{V}^\T$,
where the columns of $\tilde{U}$ are orthonormal,
the columns of $\tilde{V}$ are orthonormal,
and the entries of $\tilde{\Sigma}$ are all nonnegative
and zero off the diagonal.

The algorithm of the present paper is randomized,
but succeeds with very high probability;
for example, the bound~(\ref{explicit_eval}) on its accuracy holds
with probability greater than $1-10^{-15}$.
The algorithm is similar to many recently discussed randomized algorithms
for low-rank approximation, but produces approximations of higher accuracy
when the singular values $\sigma_{k+1}$, $\sigma_{k+2}$, $\sigma_{k+3}$, \dots\
of the matrix being approximated decay slowly; see, for example, \cite{sarlos3}
or~\cite{liberty-woolfe-martinsson-rokhlin-tygert}.
The algorithm is a variant of that in~\cite{roweis},
and the analysis of the present paper should extend to the algorithm
of~\cite{roweis}; \cite{roweis} stimulated the authors' collaboration.
The algorithm may be regarded as a generalization
of the randomized power methods of~\cite{dixon}
and~\cite{kuczynski-wozniakowski},
and in fact we use the latter to ascertain the approximations' accuracy
rapidly and reliably.

The algorithm admits obvious ``out-of-core'' and parallel implementations
(assuming that the user chooses the parameter $i$ in~(\ref{very_rough})
to be reasonably small).
As with the algorithms of~\cite{dixon}, \cite{kuczynski-wozniakowski},
\cite{liberty-woolfe-martinsson-rokhlin-tygert},
\cite{martinsson-rokhlin-tygert3}, \cite{roweis},
\cite{sarlos3}, and~\cite{sarlos4},
the core steps of the algorithm of the present paper
involve the application of the matrix $A$ being approximated
and its transpose $A^\T$ to random vectors.
The algorithm is more efficient when $A$ and $A^\T$ can be applied rapidly
to arbitrary vectors, such as when $A$ is sparse.

Throughout the present paper, we use $\Id$ to denote an identity matrix.
We use $\0s$ to denote a matrix whose entries are all zeros.
For any matrix $A$, we use $\|A\|$ to denote the spectral norm of $A$,
that is, $\|A\|$ is the greatest singular value of $A$.
Furthermore, the entries of all matrices in the present paper are real valued,
though the algorithm and analysis extend trivially to matrices
whose entries are complex valued.

The present paper has the following structure:
Section~\ref{prelims} collects together various known facts
which later sections utilize.
Section~\ref{apparatus} provides the principal lemmas used in bounding
the accuracy of the algorithm in Section~\ref{algorithm}.
Section~\ref{algorithm} describes the algorithm of the present paper.
Section~\ref{numerical} illustrates the performance of the algorithm
via several numerical examples.
The appendix, Section~\ref{appendix}, proves two lemmas stated earlier
in Section~\ref{apparatus}.
We encourage the reader to begin with Sections~\ref{algorithm}
and~\ref{numerical}, referring back to the relevant portions
of Sections~\ref{prelims} and~\ref{apparatus} as they are referenced.

\section{Preliminaries}
\label{prelims}

In this section, we summarize various facts about matrices and functions.
Subsection~\ref{general_singular_values} discusses the singular values
of arbitrary matrices. Subsection~\ref{random_singular_values}
discusses the singular values of certain random matrices.
Subsection~\ref{monotone} observes that a certain function is monotone.

\subsection{Singular values of general matrices}
\label{general_singular_values}

The following trivial technical lemma will be needed
in Section~\ref{apparatus}.

\begin{lemma}
Suppose that $m$ and $n$ are positive integers with $m \ge n$.
Suppose further that $A$ is a real $m \times n$ matrix
such that the least (that is, the $n^\ith$ greatest) singular value $\sigma_n$
of $A$ is nonzero.

Then,
\begin{equation}
\label{pseudoinverse_norm}
\left\| (A^\T \, A)^{-1} \, A^\T \right\| = \frac{1}{\sigma_n}.
\end{equation}
\end{lemma}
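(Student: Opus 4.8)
The plan is to compute $\left\| (A^\T A)^{-1} A^\T \right\|$ directly from the SVD of $A$, reducing everything to a diagonal matrix whose largest entry is immediately visible.

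First I would write $A = U\,\Sigma\,V^\T$ as in~(\ref{full_svd}), except that since $m \ge n$ and we only need the action of $A$, it is cleanest to use the thin SVD: $A = \hat U\,\hat\Sigma\,V^\T$, where $\hat U$ is $m \times n$ with orthonormal columns, $V$ is $n \times n$ unitary, and $\hat\Sigma$ is the $n \times n$ diagonal matrix with diagonal entries $\sigma_1 \ge \sigma_2 \ge \dots \ge \sigma_n > 0$ (the hypothesis $\sigma_n \ne 0$ is exactly what makes $\hat\Sigma$ invertible, hence what makes $A^\T A$ invertible). Then $A^\T A = V\,\hat\Sigma\,\hat U^\T \hat U\,\hat\Sigma\,V^\T = V\,\hat\Sigma^2\,V^\T$, using $\hat U^\T \hat U = \Id$. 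Therefore $(A^\T A)^{-1} = V\,\hat\Sigma^{-2}\,V^\T$, and
\begin{equation}
(A^\T A)^{-1} A^\T = V\,\hat\Sigma^{-2}\,V^\T \cdot V\,\hat\Sigma\,\hat U^\T = V\,\hat\Sigma^{-1}\,\hat U^\T.
\end{equation}

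Next I would take the spectral norm. Since $V$ is unitary and $\hat U$ has orthonormal columns, left- and right-multiplication by $V$ and by $\hat U^\T$ do not change the spectral norm: $\left\| V\,\hat\Sigma^{-1}\,\hat U^\T \right\| = \left\| \hat\Sigma^{-1} \right\|$. The matrix $\hat\Sigma^{-1}$ is diagonal with entries $1/\sigma_1, \dots, 1/\sigma_n$, so its spectral norm is its largest diagonal entry, which is $1/\sigma_n$ because $\sigma_n$ is the smallest of the $\sigma_j$. This gives~(\ref{pseudoinverse_norm}).

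There is essentially no obstacle here; the only point requiring a word of care is the claim that multiplying by $\hat U$ (which is rectangular, not square) preserves the spectral norm — this holds because $\|\hat U B\|^2 = \|B^\T \hat U^\T \hat U B\| = \|B^\T B\| = \|B\|^2$ for any $B$, using $\hat U^\T \hat U = \Id$, and similarly $\|B\,\hat U^\T\| = \|B\|$ since $\|B\,\hat U^\T\| = \|\hat U B^\T\|$ by taking transposes. Alternatively one may bypass the SVD and argue that $(A^\T A)^{-1}A^\T$ is the Moore--Penrose pseudoinverse of $A$, whose singular values are the reciprocals of the nonzero singular values of $A$; the largest of these is $1/\sigma_n$. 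Either way the computation is routine, which matches the author's description of the lemma as ``trivial.''
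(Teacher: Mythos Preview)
Your proof is correct; the paper itself does not supply a proof of this lemma, describing it only as a ``trivial technical lemma,'' so your SVD computation is exactly the routine verification the authors are implicitly invoking.
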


The following lemma states that the greatest singular value of a matrix $A$
is at least as large as the greatest singular value
of any rectangular block of entries in $A$;
the lemma is a straightforward consequence
of the minimax properties of singular values
(see, for example, Section~47 of Chapter~2 in~\cite{wilkinson}).

\begin{lemma}
\label{minimax_consequence}
Suppose that $k$, $l$, $m$, and~$n$ are positive integers
with $k \le m$ and $l \le n$.
Suppose further that $A$ is a real $m \times n$ matrix,
and $B$ is a $k \times l$ rectangular block of entries in $A$.

Then, the greatest singular value of $B$ is at most
the greatest singular value of $A$.
\end{lemma}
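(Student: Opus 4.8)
The plan is to derive this from the minimax (Courant--Fischer) characterization of singular values, exactly as the statement's preamble suggests. Recall that for a real $m \times n$ matrix $A$, the greatest singular value satisfies
\begin{equation}
\| A \| = \max_{x \ne \0s} \frac{\| A x \|}{\| x \|} = \max_{x \ne \0s, \; y \ne \0s} \frac{y^\T A x}{\| y \| \, \| x \|},
\end{equation}
where the vectors $x$ range over $\mathbb{R}^n$ and $y$ over $\mathbb{R}^m$. So the first step is to set up notation: say $B$ consists of the entries of $A$ in rows $i_1 < i_2 < \dots < i_k$ and columns $j_1 < j_2 < \dots < j_l$. Let $P$ be the $m \times k$ matrix whose columns are the standard basis vectors $e_{i_1}, \dots, e_{i_k}$ of $\mathbb{R}^m$, and let $Q$ be the $n \times l$ matrix whose columns are $e_{j_1}, \dots, e_{j_l}$ in $\mathbb{R}^n$; then $B = P^\T A Q$, and both $P$ and $Q$ have orthonormal columns.

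The second step is the actual estimate. Given any $u \in \mathbb{R}^k$ and $v \in \mathbb{R}^l$, set $x = Q v \in \mathbb{R}^n$ and $y = P u \in \mathbb{R}^m$; since $P$ and $Q$ have orthonormal columns, $\| x \| = \| v \|$ and $\| y \| = \| u \|$. Moreover $u^\T B v = u^\T P^\T A Q v = y^\T A x$. Hence
\begin{equation}
\frac{u^\T B v}{\| u \| \, \| v \|} = \frac{y^\T A x}{\| y \| \, \| x \|} \le \| A \|,
\end{equation}
and taking the supremum over nonzero $u$ and $v$ gives $\| B \| \le \| A \|$, which is the claim. (Alternatively, one could phrase this using only one-sided vectors: $\| B v \| = \| P^\T A Q v \| \le \| A Q v \| \le \| A \| \, \| Q v \| = \| A \| \, \| v \|$, using that $\| P^\T w \| \le \| w \|$ because $P$ has orthonormal columns and that $\| Q v \| = \| v \|$; this avoids invoking the bilinear form at all.)

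I do not anticipate a genuine obstacle here; the statement is flagged as ``a straightforward consequence'' and the proof is essentially a two-line computation once $B$ is written as $P^\T A Q$ with $P, Q$ having orthonormal columns. The only point requiring a modicum of care is making explicit that extracting a submatrix of rows and columns is realized by left- and right-multiplication by coordinate-selection matrices with orthonormal columns, and that such matrices are norm-nonincreasing (for $P^\T$) and norm-preserving (for $Q$) in the relevant directions. If one prefers to stay closer to the cited minimax properties in Wilkinson rather than introduce $P$ and $Q$, one can instead argue that the singular values of $B$ interlace appropriately by realizing $B$ as a compression of $A$ to a coordinate subspace, but the $P^\T A Q$ formulation is cleaner and entirely self-contained.
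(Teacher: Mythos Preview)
Your proof is correct. The paper does not actually supply a proof of this lemma; it only remarks that the result is ``a straightforward consequence of the minimax properties of singular values'' and cites Wilkinson. Your argument via $B = P^\T A Q$ with coordinate-selection matrices $P$ and $Q$ having orthonormal columns is exactly the standard way to make that remark precise, and the one-sided variant you sketch is if anything slightly cleaner.
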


The following classical lemma provides an approximation $Q \, S$
to an $n \times l$ matrix $R$
via an $n \times k$ matrix $Q$ whose columns are orthonormal,
and a $k \times l$ matrix $S$.
As remarked in Observation~\ref{least_squares},
the proof of this lemma provides a classic algorithm for computing $Q$ and $S$,
given $R$. We include the proof since we will be using this algorithm.

\begin{lemma}
Suppose that $k$, $l$, and $n$ are positive integers with $k < l \le n$,
and $R$ is a real $n \times l$ matrix.

Then, there exist a real $n \times k$ matrix $Q$
whose columns are orthonormal,
and a real $k \times l$ matrix $S$, such that
\begin{equation}
\label{svd_qr}
\| Q \, S - R \| \le \rho_{k+1},
\end{equation}
where $\rho_{k+1}$ is the $(k+1)^\ist$ greatest singular value of $R$.
\end{lemma}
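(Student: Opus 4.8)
The plan is to construct $Q$ and $S$ directly from the SVD of $R$, in the manner of the canonical best rank-$k$ approximation described in the introduction. First I would write the full SVD $R = W \, P \, Z^\T$, where $W$ is a real unitary $n \times n$ matrix, $Z$ is a real unitary $l \times l$ matrix, and $P$ is a real $n \times l$ matrix whose only nonzero entries are the singular values $\rho_1 \ge \rho_2 \ge \dots \ge \rho_l$ on the diagonal. Then I would take $Q$ to be the leftmost $n \times k$ block of $W$; its columns are orthonormal because they are $k$ of the columns of a unitary matrix. The natural choice for $S$ is $S = Q^\T \, R$, a real $k \times l$ matrix, so that $Q \, S = Q \, Q^\T \, R$ is the orthogonal projection of the columns of $R$ onto the span of the leftmost $k$ left singular vectors of $R$.

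Next I would verify the bound~(\ref{svd_qr}). With the choice above, $Q \, Q^\T = \tilde{W} \, \tilde{W}^\T$ where $\tilde{W}$ is the leftmost $n \times k$ block of $W$, and a short computation using $R = W \, P \, Z^\T$ shows that $Q \, S - R = Q \, Q^\T \, R - R = -(\Id - \tilde{W} \, \tilde{W}^\T) \, W \, P \, Z^\T$, which equals $-W \, \hat{P} \, Z^\T$ where $\hat{P}$ is obtained from $P$ by zeroing out its first $k$ rows. Since $W$ and $Z$ are unitary, $\| Q \, S - R \| = \| \hat{P} \|$, and $\hat{P}$ is a diagonal-type matrix whose nonzero entries are $\rho_{k+1}, \rho_{k+2}, \dots, \rho_l$; hence $\| \hat{P} \| = \rho_{k+1}$. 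This in fact gives equality in~(\ref{svd_qr}), which is stronger than the stated inequality, so I would simply remark that the inequality follows a fortiori.

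I should double-check the hypotheses: we need $k < l \le n$, which guarantees that $R$ has at least $k+1$ singular values so that $\rho_{k+1}$ is well defined, and that the leftmost $n \times k$ block of $W$ genuinely has $k$ columns. Everything goes through with these assumptions in place. I would also note, as the statement's preamble promises, that this argument is constructive: it exhibits the algorithm that computes $Q$ and $S$ from $R$ (compute the SVD of $R$, extract the leading $k$ left singular vectors as $Q$, and set $S = Q^\T \, R$), which is the algorithm referenced later in Observation~\ref{least_squares}.

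There is no real obstacle here; the only point requiring a moment's care is the bookkeeping of which rows and columns survive when one forms $Q \, Q^\T \, R$ from the SVD, and the observation that conjugation by the unitary factors $W$ and $Z$ preserves the spectral norm. Alternatively, one could avoid the explicit block manipulation entirely by invoking the characterization of the best rank-$k$ approximation from the introduction: $Q \, S = Q \, Q^\T \, R$ has rank at most $k$, and the best rank-$k$ approximation to $R$ has spectral-norm error exactly $\rho_{k+1}$, so it suffices to check that this particular rank-$k$ matrix attains that optimum — but since the paper wants the proof to double as an algorithm, I would present the direct computation rather than lean on the optimality statement.
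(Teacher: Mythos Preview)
Your proof is correct and follows essentially the same route as the paper's: both take $Q$ to be the first $k$ left singular vectors of $R$ and obtain the bound by reading off the spectral norm of the truncated SVD. Your $S = Q^\T R$ coincides with the paper's choice (the top $k \times l$ block of $\Sigma V^\T$), just written in projection form; the paper instead splits $U$ and $\Sigma V^\T$ into explicit blocks and writes $R - QS = PT$, but the content is identical and both arguments in fact yield equality in~(\ref{svd_qr}).
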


\begin{proof}
We start by forming an SVD of $R$,
\begin{equation}
\label{little_svd}
R = U \, \Sigma \, V^\T,
\end{equation}
where $U$ is a real $n \times l$ matrix whose columns are orthonormal,
$V$ is a real $l \times l$ matrix whose columns are orthonormal,
and $\Sigma$ is a real diagonal $l \times l$ matrix, such that
\begin{equation}
\label{little_ordering}
\Sigma_{j,j} = \rho_j
\end{equation}
for $j = 1$,~$2$, \dots, $l-1$,~$l$,
where $\Sigma_{j,j}$ is the entry in row $j$ and column $j$ of $\Sigma$,
and $\rho_j$ is the $j^\th$ greatest singular value of $R$.
We define $Q$ to be the leftmost $n \times k$ block of $U$,
and $P$ to be the rightmost $n \times (l-k)$ block of $U$, so that
\begin{equation}
\label{left_sing}
U = \left( \begin{array}{c|c} Q & P \end{array} \right).
\end{equation}
We define $S$ to be the uppermost $k \times l$ block of $\Sigma \, V^\T$,
and $T$ to be the lowermost $(l-k) \times l$ block of $\Sigma \, V^\T$,
so that
\begin{equation}
\label{right_sing}
\Sigma \, V^\T = \left( \begin{array}{c} S \\\hline T \end{array} \right).
\end{equation}
Combining~(\ref{little_svd}), (\ref{little_ordering}),
(\ref{left_sing}), (\ref{right_sing}),
and the fact that the columns of $U$ are orthonormal,
as are the columns of $V$, yields~(\ref{svd_qr}).
\end{proof}

\begin{observation}
\label{least_squares}
In order to compute the matrices $Q$ and $S$ in~(\ref{svd_qr})
from the matrix $R$,
we can construct~(\ref{little_svd}),
and then form $Q$ and $S$
according to~(\ref{left_sing}) and~(\ref{right_sing}).
(See, for example, Chapter~8 in~\cite{golub-van_loan} for details
concerning the computation of the SVD.)
\end{observation}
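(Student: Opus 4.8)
The plan is to observe that the final statement is a bookkeeping remark about the \emph{proof} of the preceding lemma, not a new assertion, so the argument consists simply of transcribing that construction and checking that each of its steps is effectively computable. Concretely, I would begin by recalling that the proof of the preceding lemma does not merely assert the existence of $Q$ and $S$ obeying~(\ref{svd_qr}); it produces them by forming an SVD~(\ref{little_svd}) of $R$, setting $Q$ equal to the leftmost $n \times k$ block of the left-singular-vector matrix $U$ as in~(\ref{left_sing}), and setting $S$ equal to the uppermost $k \times l$ block of $\Sigma \, V^\T$ as in~(\ref{right_sing}).

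Next I would verify that this recipe is something one can actually carry out. A (thin) SVD of any real $n \times l$ matrix exists and is delivered by standard algorithms --- here I would cite Chapter~8 of~\cite{golub-van_loan}, exactly as the statement does --- so~(\ref{little_svd}) is available; once $U$, $\Sigma$, $V$ are in hand, extracting the leftmost $k$ columns of $U$ and the uppermost $k$ rows of $\Sigma \, V^\T$ is immediate. This yields a matrix $Q$ whose columns are orthonormal (being a column submatrix of $U$, whose columns are orthonormal) and a matrix $S$ of size $k \times l$ --- precisely the objects the lemma requires. Finally, I would invoke the estimate already established in that lemma's proof: for $Q$ and $S$ defined in this way, combining~(\ref{little_svd}), (\ref{little_ordering}), (\ref{left_sing}), and~(\ref{right_sing}) with the orthonormality of the columns of $U$ and of $V$ gives $\| Q \, S - R \| \le \rho_{k+1}$, i.e.~(\ref{svd_qr}). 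Hence the displayed procedure does produce matrices meeting the lemma's conclusion, which is all that is claimed.

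I do not expect any genuine obstacle here; the lone point deserving a word of care is that the SVD in~(\ref{little_svd}) is taken in ``thin'' form, with $U$ of size $n \times l$ rather than $n \times n$. I would handle this simply by noting that this is the form the standard routines produce, or equivalently that one may start from a full SVD and discard the extra $n-l$ columns of the left factor without altering $Q$ or $S$. Everything else follows directly from the proof of the preceding lemma.
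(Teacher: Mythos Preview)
Your proposal is correct and matches the paper's treatment exactly: the observation is not given a separate proof in the paper, but simply points back to the constructive proof of the preceding lemma, and you have accurately identified and transcribed that construction. There is nothing to add.
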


\subsection{Singular values of random matrices}
\label{random_singular_values}

The following lemma provides a highly probable upper bound
on the greatest singular value
of a square matrix whose entries are independent, identically distributed
(i.i.d.) Gaussian random variables of zero mean and unit variance;
Formula~8.8 in~\cite{goldstine-von_neumann} provides an equivalent formulation
of the lemma.

\begin{lemma}
\label{greatest_bound}
Suppose that $n$ is a positive integer,
$G$ is a real $n \times n$ matrix whose entries are
i.i.d.\ Gaussian random variables of zero mean and unit variance,
and $\gamma$ is a positive real number, such that $\gamma > 1$ and
\begin{equation}
\label{failure_prob}
1 - \frac{1}{4 \, (\gamma^2-1) \, \sqrt{\pi n \gamma^2}}
    \left( \frac{2 \gamma^2}{e^{\gamma^2-1}} \right)^n
\end{equation}
is nonnegative.

Then, the greatest singular value of $G$ is at most $\sqrt{2n} \, \gamma$
with probability not less than the amount in~(\ref{failure_prob}).
\end{lemma}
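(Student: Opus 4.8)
The plan is to bound $\|G\|$ via its action on vectors, or equivalently via the operator-norm density of a Gaussian matrix. I would first recall that the greatest singular value of $G$ equals $\sup_{\|x\|=1} \|Gx\|$, but a cleaner route is to use the exact distribution of $\|G\|$. For a real $n\times n$ Gaussian matrix with i.i.d.\ standard-normal entries, the probability density of the largest singular value is explicitly known (this is the classical Wishart/Laguerre-ensemble computation); Formula~8.8 in~\cite{goldstine-von_neumann} records exactly the tail estimate we need. So the skeleton is: (i) write $\mathbb{P}\{\|G\| > t\}$ as an integral of that density over $(t,\infty)$; (ii) for $t = \sqrt{2n}\,\gamma$, bound the integrand from above by a decaying exponential times a polynomial; (iii) integrate that bound in closed form to recover the quantity in~(\ref{failure_prob}) as a lower bound on $\mathbb{P}\{\|G\| \le \sqrt{2n}\,\gamma\}$.

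More concretely, the key steps in order: First, invoke the known formula for the distribution of the largest singular value $s$ of $G$, or alternatively bound $\mathbb{P}\{s>t\}$ by $n$ times the probability that a single fixed unit vector gets stretched past $t$ together with a net/volumetric factor --- but since the paper cites~\cite{goldstine-von_neumann}, I would instead quote the precise one-line tail bound from Formula~8.8 there, which states that $\mathbb{P}\{s > \sqrt{2n}\,\gamma\}$ is at most $\frac{1}{4(\gamma^2-1)\sqrt{\pi n \gamma^2}}\left(\frac{2\gamma^2}{e^{\gamma^2-1}}\right)^n$. Second, observe that the hypothesis that $1$ minus this quantity is nonnegative is exactly what is needed for ``probability not less than~(\ref{failure_prob})'' to be a meaningful (nonnegative) statement; it rules out vacuous or negative ``probabilities.'' Third, take complements: $\mathbb{P}\{s \le \sqrt{2n}\,\gamma\} = 1 - \mathbb{P}\{s > \sqrt{2n}\,\gamma\} \ge 1 - \frac{1}{4(\gamma^2-1)\sqrt{\pi n\gamma^2}}\left(\frac{2\gamma^2}{e^{\gamma^2-1}}\right)^n$, which is~(\ref{failure_prob}).

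If instead one wanted a self-contained derivation rather than a citation, the main obstacle would be the sharp integration of the Gaussian-matrix largest-eigenvalue density: one would start from the joint density of the eigenvalues of $G^\T G$ (proportional to a Vandermonde determinant times a product of $\lambda_j^{-1/2}e^{-\lambda_j/2}$ on $\mathbb{R}_+^n$), bound all but the largest eigenvalue by integrating them out, and then estimate the resulting one-dimensional tail $\int_{2n\gamma^2}^\infty \lambda^{(n-1)/2-1/2} e^{-\lambda/2}\,d\lambda$ (roughly) by pulling out the dominant term and comparing with a geometric series. The delicate part is tracking the polynomial prefactor $\frac{1}{4(\gamma^2-1)\sqrt{\pi n \gamma^2}}$ exactly rather than up to constants; this is precisely the kind of bookkeeping that~\cite{goldstine-von_neumann} did, which is why quoting Formula~8.8 is the economical choice here.

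Given that the paper introduces this as a known lemma with an explicit reference providing ``an equivalent formulation,'' I expect the author's proof to be essentially a two-line reduction: restate Formula~8.8 of~\cite{goldstine-von_neumann} in the present notation and take the complementary event. Thus the ``hard part'' is not in this lemma at all but was done elsewhere; the only thing to verify is that the change of variables between ``greatest singular value $\le \sqrt{2n}\,\gamma$'' here and whatever normalization~\cite{goldstine-von_neumann} uses (they may phrase it in terms of $\|G/\sqrt{n}\|$ or of eigenvalues of $\frac{1}{n}G^\T G$) lines up, including the factor of $\sqrt{2}$ and the role of $\gamma>1$.
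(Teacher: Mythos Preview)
Your proposal is correct and matches the paper's treatment exactly: the paper does not prove this lemma at all but simply states it and cites Formula~8.8 in~\cite{goldstine-von_neumann} as an equivalent formulation. Your anticipation that the ``proof'' is just a restatement of that formula in the present notation plus taking the complementary event is precisely what the paper does (implicitly, by giving no proof and pointing to the reference).
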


Combining Lemmas~\ref{minimax_consequence} and~\ref{greatest_bound}
yields the following lemma,
providing a highly probable upper bound on the greatest singular value
of a rectangular matrix whose entries are i.i.d.\ Gaussian
random variables of zero mean and unit variance.

\begin{lemma}
\label{greatest_value}
Suppose that $l$, $m$, and $n$ are positive integers
with $n \ge l$ and $n \ge m$.
Suppose further that $G$ is a real $l \times m$ matrix whose entries are
i.i.d.\ Gaussian random variables of zero mean and unit variance,
and $\gamma$ is a positive real number, such that
$\gamma > 1$ and~(\ref{failure_prob}) is nonnegative.

Then, the greatest singular value of $G$ is at most $\sqrt{2n} \, \gamma$
with probability not less than the amount in~(\ref{failure_prob}).
\end{lemma}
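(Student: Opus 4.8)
The plan is to reduce the rectangular case to the square case already handled in Lemma~\ref{greatest_bound}, using the block-monotonicity of the greatest singular value from Lemma~\ref{minimax_consequence}. Since $n \ge l$ and $n \ge m$, the matrix $G$ fits as an $l \times m$ rectangular block of entries inside an $n \times n$ matrix.

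Concretely, first I would introduce a real $n \times n$ matrix $\tilde{G}$ whose entries are i.i.d.\ Gaussian random variables of zero mean and unit variance, chosen so that the $l \times m$ submatrix formed by its first $l$ rows and first $m$ columns is exactly $G$ (the remaining entries of $\tilde{G}$ are fresh independent Gaussians). Such a $\tilde{G}$ exists precisely because $l \le n$ and $m \le n$. Next, I would apply Lemma~\ref{greatest_bound} to $\tilde{G}$: since $\gamma > 1$ and the quantity in~(\ref{failure_prob}) is nonnegative by hypothesis, the greatest singular value of $\tilde{G}$ is at most $\sqrt{2n}\,\gamma$ with probability not less than the amount in~(\ref{failure_prob}).

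Finally, I would invoke Lemma~\ref{minimax_consequence} with the roles $k = l$ and $l$ (in the notation of that lemma) equal to $m$, and with $A = \tilde{G}$, $B = G$: the greatest singular value of $G$ is at most the greatest singular value of $\tilde{G}$. Chaining this with the bound from the previous step shows that the greatest singular value of $G$ is at most $\sqrt{2n}\,\gamma$ with probability not less than the amount in~(\ref{failure_prob}), which is the claim.

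I do not anticipate any genuine obstacle here; the only point requiring a modicum of care is the observation that the event bounding $\|\tilde{G}\|$ is defined on the same probability space as $G$ (it suffices that $G$ is a measurable function of $\tilde{G}$, which holds by construction), so that the probability estimate transfers verbatim. Everything else is an immediate combination of the two cited lemmas.
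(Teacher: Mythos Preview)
Your proposal is correct and follows exactly the approach the paper indicates: it combines Lemma~\ref{minimax_consequence} (block-monotonicity of the largest singular value) with Lemma~\ref{greatest_bound} (the square-Gaussian bound) by embedding $G$ as an $l \times m$ block of an $n \times n$ Gaussian matrix. The paper merely states that the lemma follows from this combination, and your write-up fills in precisely those details.
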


The following lemma provides a highly probable lower bound
on the least singular value
of a rectangular matrix whose entries are i.i.d.\ Gaussian
random variables of zero mean and unit variance;
Formula~2.5 in~\cite{chen-dongarra}
and the proof of Lemma~4.1 in~\cite{chen-dongarra}
together provide an equivalent formulation of Lemma~\ref{least_value}.

\begin{lemma}
\label{least_value}
Suppose that $j$ and $l$ are positive integers with $j \le l$.
Suppose further that $G$ is a real $l \times j$ matrix whose entries are
i.i.d.\ Gaussian random variables of zero mean and unit variance,
and $\beta$ is a positive real number, such that
\begin{equation}
\label{failure_prob2}
1 - \frac{1}{\sqrt{2 \pi \, (l-j+1)}}
 \, \left( \frac{e}{(l-j+1) \, \beta} \right)^{l-j+1}
\end{equation}
is nonnegative.

Then, the least (that is, the $j^\ith$ greatest) singular value
of $G$ is at least $1 / (\sqrt{l} \; \beta)$
with probability not less than the amount in~(\ref{failure_prob2}).
\end{lemma}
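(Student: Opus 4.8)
The plan is to relate the least singular value of the rectangular Gaussian matrix $G$ to a more tractable quantity. Since $G$ is $l \times j$ with $j \le l$, its least singular value $\sigma_j(G)$ equals $1/\|(G^\T G)^{-1} G^\T\|$ by the first lemma of the excerpt (Formula~\ref{pseudoinverse_norm}), so it suffices to produce a highly probable \emph{upper} bound on $\|(G^\T G)^{-1} G^\T\| = \|G^+\|$, the spectral norm of the pseudoinverse. Equivalently, $\sigma_j(G) = \min_{\|x\|=1} \|Gx\|$ where $x$ ranges over unit vectors in $\mathbb{R}^j$, and one wants to show this minimum is not too small.

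First I would reduce to a scalar computation using rotational invariance of the Gaussian distribution. The distribution of $G$ is invariant under $G \mapsto WG$ for any orthogonal $W$, and the quantity $\|G^+\|$ depends on $G$ only through $G^\T G$ (the Gram matrix), whose distribution is the Wishart distribution with $l$ degrees of freedom in dimension $j$. The key classical fact I would invoke is the bidiagonalization of a Gaussian matrix: one can find orthogonal $W$ (depending on $G$) reducing $G$ to lower bidiagonal form, where the nonzero entries are independent, the diagonal entries being $\chi$-distributed with $l, l-1, \dots, l-j+1$ degrees of freedom respectively and the subdiagonal entries being $\chi$-distributed with $j-1, \dots, 1$ degrees of freedom. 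The least singular value is then controlled from below by the smallest of these independent $\chi$ variables, and the binding one is the $\chi_{l-j+1}$ variable (the last diagonal entry). This is exactly why the exponent $l-j+1$ appears in~(\ref{failure_prob2}).

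The main step is then a tail estimate: bound $\mathbb{P}[\sigma_j(G) < 1/(\sqrt{l}\,\beta)]$ by $\mathbb{P}[\chi_{l-j+1} < \sqrt{l}/(\sqrt{l}\,\beta)] = \mathbb{P}[\chi_{l-j+1}^2 < 1/\beta^2]$, and estimate the left tail of a chi-squared random variable with $p := l-j+1$ degrees of freedom. Writing out the density $x^{p/2-1} e^{-x/2} / (2^{p/2}\Gamma(p/2))$ and integrating from $0$ to $1/\beta^2$, one bounds the integrand by its value near the upper endpoint (or integrates exactly and bounds the incomplete gamma function), then applies Stirling's inequality $\Gamma(p/2) \ge \sqrt{2\pi}\,(p/2)^{(p-1)/2} e^{-p/2}$ (or its standard refinement) to extract the explicit form $\frac{1}{\sqrt{2\pi p}}\,(e/(p\beta))^{p}$. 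The hardest part will be performing this tail integral cleanly enough to land on exactly the constant in~(\ref{failure_prob2}) rather than a weaker constant — the bookkeeping with $\Gamma$-function bounds and the factor $(e/(p\beta))^p$ requires care, and getting the $\sqrt{l}$ normalization (rather than $\sqrt{l-j+1}$ or $\sqrt{j}$) to match the statement of the lemma is a subtle point tied to how the $\chi$ variable is scaled. Since the excerpt attributes an equivalent formulation to Chen--Dongarra~\cite{chen-dongarra}, I would either reproduce their argument along these lines or cite it directly; in the appendix-style exposition of this paper, reproducing the $\chi$-squared tail bound with explicit constants is the natural route.
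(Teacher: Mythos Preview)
The paper does not prove this lemma at all: it is stated as a citation, with the sentence ``Formula~2.5 in~\cite{chen-dongarra} and the proof of Lemma~4.1 in~\cite{chen-dongarra} together provide an equivalent formulation of Lemma~\ref{least_value}.'' So there is nothing in the paper to compare your argument against beyond that reference.

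That said, your proposed route has a genuine gap. After bidiagonalizing (or, equivalently, taking the Bartlett/Cholesky factor of $G^\T G$), the last diagonal entry is indeed a $\chi_{l-j+1}$ variable, but it furnishes an \emph{upper} bound on $\sigma_j(G)$, not a lower one: if $R$ is the $j\times j$ triangular factor, then $R^\T e_j = r_{jj}\,e_j$, so $\sigma_j(G)=\sigma_j(R)\le r_{jj}\sim\chi_{l-j+1}$. More generally, the least singular value of a bidiagonal matrix is \emph{not} bounded below by the minimum of its entries; a $j\times j$ bidiagonal with all diagonal and off-diagonal entries equal to $1$ already has $\sigma_j$ of order $1/j$. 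Consequently the inequality you write, $\mathbb{P}[\sigma_j(G)<1/(\sqrt{l}\,\beta)]\le \mathbb{P}[\chi_{l-j+1}<1/\beta]$, is not justified by the decomposition you describe, and the $\sqrt{l}$ scaling you insert has no clear origin in that argument.

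The Chen--Dongarra argument proceeds differently: it uses the explicit joint density of the eigenvalues of the Wishart matrix $G^\T G$ (which carries the factor $\prod_p \lambda_p^{(l-j-1)/2}$), bounds the marginal density of the smallest eigenvalue near zero, and integrates to obtain $\mathbb{P}[\sigma_j(G)<x]\le C_{l,j}\,x^{\,l-j+1}$ with an explicit constant $C_{l,j}$ written in terms of Gamma functions. Substituting $x=1/(\sqrt{l}\,\beta)$ and applying Stirling's inequality to the Gamma factors is what produces the precise form in~(\ref{failure_prob2}); the $\sqrt{l}$ in the threshold comes from the normalization of that constant, not from any single $\chi$ variable. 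If you want to supply a self-contained proof rather than cite, that density-integration route is the one to reproduce.
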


\subsection{A monotone function}
\label{monotone}

The following technical lemma will be needed
in Section~\ref{algorithm}.

\begin{lemma}
\label{monotonicity}
Suppose that $\alpha$ is a nonnegative real number,
and $f$ is the function defined on $(0,\infty)$ via the formula
\begin{equation}
f(x) = \frac{1}{\sqrt{2 \pi x}} \left( \frac{e\alpha}{x} \right)^x.
\end{equation}

Then, $f$ decreases monotonically for $x > \alpha$.
\end{lemma}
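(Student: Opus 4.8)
The plan is to reduce the claim to a one-line sign computation for the derivative of $\ln f$, after first disposing of a degenerate case. If $\alpha = 0$, then $(e\alpha/x)^x = 0$ for every $x > 0$, so $f$ is identically zero on $(0,\infty)$ and in particular is (weakly) monotone; hence I may assume $\alpha > 0$ throughout, in which case $f(x) > 0$ for all $x > 0$ and it suffices to show that $g(x) = \ln f(x)$ is strictly decreasing for $x > \alpha$.

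Next I would expand the logarithm. Writing $g(x) = -\tfrac{1}{2}\ln(2\pi x) + x\ln(e\alpha/x)$ and using $\ln(e\alpha/x) = 1 + \ln\alpha - \ln x$, we get
\[
g(x) = -\tfrac{1}{2}\ln(2\pi) - \tfrac{1}{2}\ln x + x + x\ln\alpha - x\ln x.
\]
Differentiating term by term, the contribution $1 + \ln\alpha - (\ln x + 1)$ from $x + x\ln\alpha - x\ln x$ has its stray $+1$ cancel, leaving
\[
g'(x) = -\frac{1}{2x} + \ln\alpha - \ln x = -\frac{1}{2x} - \ln\frac{x}{\alpha}.
\]

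Finally I would observe that for $x > \alpha > 0$ we have $x/\alpha > 1$, so $\ln(x/\alpha) > 0$, while plainly $1/(2x) > 0$; hence $g'(x) < 0$ for every $x > \alpha$. Since $g$ is strictly decreasing on $(\alpha,\infty)$ and $t \mapsto e^t$ is increasing, $f = \exp(g)$ is strictly decreasing on $(\alpha,\infty)$, which is the assertion. There is essentially no obstacle here beyond bookkeeping: the only points needing a word of care are the separate treatment of $\alpha = 0$ (where the expansion of $g$ breaks down since $\ln\alpha = -\infty$) and verifying that the derivative of $-\tfrac12\ln x$ and of $x\ln x$ have been taken correctly so that the resulting expression for $g'(x)$ is manifestly negative on the stated interval.
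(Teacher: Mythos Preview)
Your proof is correct and is essentially the same as the paper's: both compute the logarithmic derivative $g'(x)=\ln(\alpha/x)-1/(2x)$ and observe it is negative for $x>\alpha$. Your separate treatment of the degenerate case $\alpha=0$ is a minor improvement in rigor over the paper, which simply writes the derivative formula without comment.
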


\begin{proof}
The derivative of $f$ is
\begin{equation}
\label{derivative}
f'(x) = f(x) \left( \ln\left(\frac{\alpha}{x}\right) - \frac{1}{2x} \right)
\end{equation}
for any positive real number $x$.
The right-hand side of~(\ref{derivative}) is negative when $x > \alpha$.
\end{proof}

\section{Mathematical apparatus}
\label{apparatus}

In this section, we provide lemmas to be used in Section~\ref{algorithm}
in bounding the accuracy of the algorithm of the present paper.

The following lemma, proven in the appendix (Section~\ref{appendix}),
shows that the product $A \, Q \, Q^\T$
of matrices $A$, $Q$, and $Q^\T$
is a good approximation to a matrix $A$,
provided that there exist matrices $G$ and $S$ such that
\begin{enumerate}
\item[1.] the columns of $Q$ are orthonormal,
\item[2.] $Q \, S$ is a good approximation to $(G \, (A \, A^\T)^i \, A)^\T$,
and
\item[3.] there exists a matrix $F$ such that $\| F \|$ is not too large,
and $F \, G \, (A \, A^\T)^i \, A$ is a good approximation to $A$.
\end{enumerate}

\begin{lemma}
\label{all_together2}
Suppose that $i$, $k$, $l$, $m$, and~$n$ are positive integers
with $k \le l \le m \le n$.
Suppose further that $A$ is a real $m \times n$ matrix,
$Q$ is a real $n \times k$ matrix whose columns are orthonormal,
$S$ is a real $k \times l$ matrix,
$F$ is a real $m \times l$ matrix,
and $G$ is a real $l \times m$ matrix.

Then,
\begin{equation}
\label{reconstruction2}
\| A \, Q \, Q^\T - A \|
\le 2 \, \| F \, G \, (A \, A^\T)^i \, A - A \|
  + 2 \, \| F \| \, \| Q \, S - (G \, (A \, A^\T)^i \, A)^\T \|.
\end{equation}
\end{lemma}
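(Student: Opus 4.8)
Write $H = G\,(A\,A^\T)^i\,A$, which is a real $l\times n$ matrix, so that $F\,H = F\,G\,(A\,A^\T)^i\,A$ has the same shape $m\times n$ as $A$, and $H^\T = (G\,(A\,A^\T)^i\,A)^\T$ has the same shape $n\times l$ as $Q\,S$; in this notation the claim~(\ref{reconstruction2}) reads
\[
\| A\,Q\,Q^\T - A \| \le 2\,\| F\,H - A \| + 2\,\| F \| \, \| Q\,S - H^\T \|.
\]
The plan is to regard $Q\,Q^\T$ as an orthogonal projection and to use only three facts about $Q$ and $S$: that $Q^\T Q = \Id$, that $(Q\,Q^\T)^\T = Q\,Q^\T$, and that $\| Q\,Q^\T \| \le 1$ (equivalently $\| Q \| = \| Q^\T \| \le 1$). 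Neither the explicit form of $H$ nor the inequalities $k \le l \le m \le n$ will matter beyond guaranteeing that all the matrix products are conformable.

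First I would peel off one copy of $F\,H$ from each side. Writing
\[
A\,Q\,Q^\T - A = (A - F\,H)\,Q\,Q^\T + (F\,H\,Q\,Q^\T - F\,H) + (F\,H - A),
\]
the triangle inequality and $\| Q\,Q^\T \| \le 1$ give $\| A\,Q\,Q^\T - A \| \le 2\,\| F\,H - A \| + \| F\,H\,Q\,Q^\T - F\,H \|$, which disposes of the first term of the target bound. For the remaining term, sub-multiplicativity of the spectral norm yields $\| F\,H\,Q\,Q^\T - F\,H \| \le \| F \| \, \| H\,Q\,Q^\T - H \|$, and transposing (here is where $(Q\,Q^\T)^\T = Q\,Q^\T$ enters) rewrites $\| H\,Q\,Q^\T - H \| = \| H^\T - Q\,Q^\T\,H^\T \|$. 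Finally I would insert $Q\,S$ between $H^\T$ and $Q\,Q^\T\,H^\T$:
\[
\| H^\T - Q\,Q^\T\,H^\T \| \le \| H^\T - Q\,S \| + \| Q\,S - Q\,Q^\T\,H^\T \|,
\]
and estimate the last term by $\| Q\,S - Q\,Q^\T\,H^\T \| = \| Q\,(S - Q^\T\,H^\T) \| \le \| S - Q^\T\,H^\T \|$ followed by $\| S - Q^\T\,H^\T \| = \| Q^\T\,(Q\,S - H^\T) \| \le \| Q\,S - H^\T \|$, the two inner equalities coming from $Q^\T Q = \Id$. Altogether $\| H^\T - Q\,Q^\T\,H^\T \| \le 2\,\| Q\,S - H^\T \|$, hence $\| F\,H\,Q\,Q^\T - F\,H \| \le 2\,\| F \| \, \| Q\,S - H^\T \|$, and combining with the first step gives exactly~(\ref{reconstruction2}).

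I do not expect a real obstacle; the whole argument is a sequence of applications of the triangle inequality and of sub-multiplicativity, and the only things needing attention are keeping the transposes consistent and noticing which properties of $Q$ are actually invoked. It is perhaps worth remarking that the factor $2$ in front of $\| F \| \, \| Q\,S - H^\T \|$ is not optimal: one could instead observe that $(\Id - Q\,Q^\T)\,Q\,S = 0$, so that $(\Id - Q\,Q^\T)\,H^\T = (\Id - Q\,Q^\T)\,(H^\T - Q\,S)$ and therefore $\| H^\T - Q\,Q^\T\,H^\T \| \le \| Q\,S - H^\T \|$, which would replace $2\,\| F \|$ by $\| F \|$. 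Since~(\ref{reconstruction2}) with the larger constant is all that Section~\ref{algorithm} requires, the slightly wasteful but entirely mechanical route above suffices.
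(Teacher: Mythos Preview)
Your proof is correct and follows essentially the same route as the paper's: both start from the three-term split $A\,Q\,Q^\T - A = (A-F\,H)\,Q\,Q^\T + (F\,H\,Q\,Q^\T - F\,H) + (F\,H-A)$, bound the outer two pieces by $\|Q\,Q^\T\|\le 1$, pull $\|F\|$ out of the middle piece, and then insert $Q\,S$ (equivalently $S^\T Q^\T$) via the triangle inequality together with $Q^\T Q=\Id$ to pick up the second factor of $2$. Your closing remark that $(\Id - Q\,Q^\T)\,Q\,S = 0$ would sharpen the constant on the second term from $2$ to $1$ is a nice bonus not present in the paper.
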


The following lemma, proven in the appendix (Section~\ref{appendix}),
states that,
for any positive integer $i$, matrix $A$, and matrix $G$ whose entries are
i.i.d.\ Gaussian random variables of zero mean and unit variance,
with very high probability there exists a matrix $F$
with a reasonably small norm,
such that $F \, G \, (A \, A^\T)^i \, A$ is a good approximation to $A$.
This lemma is similar to Lemma~19 of~\cite{martinsson-rokhlin-tygert3}.

\begin{lemma}
\label{probability_bounds2}
Suppose that $i$, $j$, $k$, $l$, $m$, and~$n$ are positive integers
with $j < k < l < m \le n$.
Suppose further that $A$ is a real $m \times n$ matrix,
$G$ is a real $l \times m$ matrix whose entries are
i.i.d.\ Gaussian random variables of zero mean and unit variance,
and $\beta$ and $\gamma$ are positive real numbers, such that
the $j^\ith$ greatest singular value $\sigma_j$ of $A$ is positive,
$\gamma > 1$, and
\begin{multline}
\label{probability2}
\Phi
  = 1 - \frac{1}{\sqrt{2 \pi \, (l-j+1)}}
 \, \left( \frac{e}{(l-j+1) \, \beta} \right)^{l-j+1} \\
  - \frac{1}{4 \, (\gamma^2-1) \, \sqrt{\pi \, \max(m-k,l) \; \gamma^2}}
    \left( \frac{2 \gamma^2}{e^{\gamma^2-1}} \right)^{\max(m-k,\,l)} \\
  - \frac{1}{4 \, (\gamma^2-1) \, \sqrt{\pi \, l \, \gamma^2}}
    \left( \frac{2 \gamma^2}{e^{\gamma^2-1}} \right)^l
\end{multline}
is nonnegative.

Then, there exists a real $m \times l$ matrix $F$ such that
\begin{multline}
\label{approximation2}
\| F \, G \, (A \, A^\T)^i \, A - A \|
\le \sqrt{ 2 l^2 \, \beta^2 \, \gamma^2 + 1 }
 \;\; \sigma_{j+1} \\
  + \sqrt{ 2 l \, \max(m-k,l) \, \beta^2 \, \gamma^2
        \, \left( \frac{\sigma_{k+1}}{\sigma_j} \right)^{4i} + 1 }
 \;\; \sigma_{k+1}
\end{multline}
and
\begin{equation}
\label{small_norm2}
\| F \| \le \frac{\sqrt{l} \; \beta}{(\sigma_j)^{2i}}
\end{equation}
with probability not less than $\Phi$ defined in~(\ref{probability2}),
where $\sigma_j$ is the $j^\ith$ greatest singular value of $A$,
$\sigma_{j+1}$ is the $(j+1)^\ist$ greatest singular value of $A$,
and $\sigma_{k+1}$ is the $(k+1)^\ist$ greatest singular value of $A$.
\end{lemma}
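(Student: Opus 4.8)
The plan is to build $F$ explicitly out of a singular value decomposition of $A$ and the random matrix $G$, and then estimate everything block by block. First I would write a full SVD $A = U\,\Sigma\,V^\T$, with $U$ a real $m\times m$ orthogonal matrix, $V$ a real $n\times n$ orthogonal matrix, and $\Sigma$ the $m\times n$ matrix carrying $\sigma_1 \ge \dots \ge \sigma_m$ on its main diagonal. Since $A\,A^\T = U\,(\Sigma\,\Sigma^\T)\,U^\T$, we have $(A\,A^\T)^i\,A = U\,D\,V^\T$, where $D$ is the $m\times n$ diagonal matrix with diagonal entries $\sigma_p^{2i+1}$; and since the Gaussian distribution is rotation-invariant, $H := G\,U$ is again a real $l\times m$ matrix of i.i.d.\ zero-mean unit-variance Gaussian entries, with $G\,(A\,A^\T)^i\,A = H\,D\,V^\T$. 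Next I would partition the columns of $U$ and of $H$, and correspondingly the diagonals of $D$ and $\Sigma$, into the three index groups $\{1,\dots,j\}$, $\{j+1,\dots,k\}$, $\{k+1,\dots,m\}$, writing $U = (\,U_1 \mid U_2' \mid U_2''\,)$, $H = (\,H_1 \mid H_2' \mid H_2''\,)$, and $D$, $\Sigma$ block-diagonal with blocks $D_1,D_2',D_2''$ and $\Sigma_1,\Sigma_2',\Sigma_2''$ respectively (the last $n-m$ columns of $D$ and $\Sigma$, being zero, play no role).

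The crucial step is the choice
\[
F := U_1 \, \Sigma_1^{-2i} \, (H_1^\T H_1)^{-1} H_1^\T ,
\]
which is well defined as soon as $H_1$ has full column rank (true almost surely, and quantitatively on the event introduced below), and which is rigged so that the extra power $\Sigma_1^{-2i}$ exactly undoes the power iteration on the leading $j$-dimensional singular subspace: $F\,H_1\,D_1 = U_1\,\Sigma_1^{-2i}\,D_1 = U_1\,\Sigma_1$. Because $V$ is orthogonal,
\[
\| F\,G\,(A\,A^\T)^i\,A - A \| = \| F\,H\,D - U\,\Sigma \| ,
\]
and the first column-block of $F\,H\,D - U\,\Sigma$ therefore vanishes, while the other two column-blocks equal, respectively, $(\,U_1 \mid U_2'\,)$ times the matrix obtained by stacking $\Sigma_1^{-2i}\,\Psi'\,D_2'$ above $-\Sigma_2'$, and $(\,U_1 \mid U_2''\,)$ times the matrix obtained by stacking $\Sigma_1^{-2i}\,\Psi''\,D_2''$ above $-\Sigma_2''$, where $\Psi' := (H_1^\T H_1)^{-1}H_1^\T H_2'$ and $\Psi'' := (H_1^\T H_1)^{-1}H_1^\T H_2''$. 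Splitting these two blocks apart by the triangle inequality, using that $(\,U_1 \mid U_2'\,)$ and $(\,U_1 \mid U_2''\,)$ have orthonormal columns, and using that a matrix formed by stacking $X$ above $Y$ has squared spectral norm at most $\|X\|^2 + \|Y\|^2$, reduces the whole estimate to the scalar quantities $\|\Sigma_1^{-2i}\| = \sigma_j^{-2i}$, $\|D_2'\| = \sigma_{j+1}^{2i+1}$, $\|D_2''\| = \sigma_{k+1}^{2i+1}$, $\|\Sigma_2'\| = \sigma_{j+1}$, $\|\Sigma_2''\| = \sigma_{k+1}$, together with $\|\Psi'\|$, $\|\Psi''\|$, and $\|F\|$.

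For the last three I would invoke the probabilistic lemmas of Section~\ref{prelims} on the Gaussian sub-blocks $H_1$ ($l\times j$), $H_2'$ ($l\times(k-j)$), $H_2''$ ($l\times(m-k)$), each of which has i.i.d.\ Gaussian entries. Since $\|(H_1^\T H_1)^{-1}H_1^\T\|$ equals the reciprocal of the least singular value of $H_1$ (the elementary lemma at the start of Section~\ref{prelims}), Lemma~\ref{least_value} with parameter $\beta$ gives $\|(H_1^\T H_1)^{-1}H_1^\T\| \le \sqrt{l}\,\beta$ off an event of probability at most the first subtracted term of~(\ref{probability2}); Lemma~\ref{greatest_value} with its free integer set to $l$ (legitimate because $k-j < l$) gives $\|H_2'\| \le \sqrt{2l}\,\gamma$ off an event of probability at most the third subtracted term; and Lemma~\ref{greatest_value} with its free integer set to $\max(m-k,l)$ gives $\|H_2''\| \le \sqrt{2\max(m-k,l)}\,\gamma$ off an event of probability at most the second subtracted term. (The hypothesis $\Phi \ge 0$ supplies the nonnegativity conditions required by these lemmas, since dropping any two of the subtracted terms only increases $\Phi$.) On the intersection of the three good events, which has probability at least $\Phi$ by a union bound, one gets $\|\Psi'\| \le \sqrt{2}\,l\,\beta\,\gamma$, $\|\Psi''\| \le \sqrt{2\,l\,\max(m-k,l)}\,\beta\,\gamma$, and $\|F\| \le \sqrt{l}\,\beta/\sigma_j^{2i}$, which is~(\ref{small_norm2}); plugging the scalar bounds in, absorbing the factor $(\sigma_{j+1}/\sigma_j)^{2i} \le 1$ in the first block, and taking $\sqrt{\|X\|^2 + \|Y\|^2}$ on each of the two blocks reproduces~(\ref{approximation2}) term for term.

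The genuinely non-routine point --- the main obstacle --- is pinning down the construction of $F$, and in particular the factor $\Sigma_1^{-2i}$: it is precisely this factor that makes $\|F\|$ scale like $1/\sigma_j^{2i}$ rather than like $\sigma_1/\sigma_j^{2i+1}$, and that forces the leading column-block to cancel identically. Everything after that is careful but mechanical bookkeeping, the only subtlety being to keep the three index ranges straight so that each estimate is matched to its term in~(\ref{probability2}) and~(\ref{approximation2}); I would note in passing that applying the failure estimates separately to the three sub-blocks of the single random matrix $H$ is harmless, the union bound requiring no independence.
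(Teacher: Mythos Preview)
Your proposal is correct and follows essentially the same route as the paper's proof: the construction $F = U_1\,\Sigma_1^{-2i}\,(H_1^\T H_1)^{-1} H_1^\T$ is exactly the paper's $F = U\,P$ with $P$ having top block $S^{-2i}H^{(-1)}$ and zeros below, and the three probabilistic estimates (Lemma~\ref{least_value} on $H_1$, Lemma~\ref{greatest_value} on $H_2'$ and $H_2''$) are invoked in the same way with the same parameters. The only cosmetic difference is the packaging of the norm bound --- you split off the two nontrivial column blocks by the triangle inequality and then bound each stack by $\sqrt{\|X\|^2+\|Y\|^2}$, whereas the paper bounds all four nonzero blocks simultaneously and then uses $\sqrt{x+y}\le\sqrt{x}+\sqrt{y}$ --- but both arrive at~(\ref{approximation2}) verbatim.
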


Given a matrix $A$,
and a matrix $G$ whose entries are i.i.d.\ Gaussian random variables
of zero mean and unit variance,
the following lemma provides a highly probable upper bound
on the singular values of the product $G \, A$
in terms of the singular values of $A$.
This lemma is reproduced from~\cite{martinsson-rokhlin-tygert3},
where it appears as Lemma~20.

\begin{lemma}
\label{singular_value_stretching}
Suppose that $j$, $k$, $l$, $m$, and~$n$ are positive integers
with $k < l$, such that $k + j < m$ and $k + j < n$.
Suppose further that $A$ is a real $m \times n$ matrix,
$G$ is a real $l \times m$ matrix whose entries are
i.i.d.\ Gaussian random variables of zero mean and unit variance,
and $\gamma$ is a positive real number, such that
$\gamma > 1$ and
\begin{multline}
\label{probability3}
\Xi
  = 1 - \frac{1}{4 \, (\gamma^2-1) \, \sqrt{\pi \, \max(m-k-j,l) \, \gamma^2}}
    \left( \frac{2 \gamma^2}{e^{\gamma^2-1}} \right)^{\max(m-k-j,\,l)} \\
  - \frac{1}{4 \, (\gamma^2-1) \, \sqrt{\pi \, \max(k+j,l) \; \gamma^2}}
    \left( \frac{2 \gamma^2}{e^{\gamma^2-1}} \right)^{\max(k+j,\,l)}
\end{multline}
is nonnegative.

Then,
\begin{equation}
\label{stretched_singular_value}
\rho_{k+1} \le \sqrt{2 \, \max(k+j,l)} \; \gamma \; \sigma_{k+1}
             + \sqrt{2 \, \max(m-k-j,l)} \; \gamma \; \sigma_{k+j+1}
\end{equation}
with probability not less than $\Xi$ defined in~(\ref{probability3}),
where $\rho_{k+1}$ is the $(k+1)^\ist$ greatest singular value of $G \, A$,
$\sigma_{k+1}$ is the $(k+1)^\ist$ greatest singular value of $A$,
and $\sigma_{k+j+1}$ is the $(k+j+1)^\ist$ greatest singular value of $A$.
\end{lemma}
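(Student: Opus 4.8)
The plan is to replace $A$ by its singular values, peel off the $k+j$ largest of them, and control the two remaining pieces with the Gaussian spectral-norm estimate of Lemma~\ref{greatest_value}. First I would form a full SVD $A = U\,\Sigma\,V^\T$, with $U$ a real $m\times m$ orthogonal matrix, $V$ a real $n\times n$ orthogonal matrix, and $\Sigma$ the real $m\times n$ matrix carrying $\sigma_1 \ge \sigma_2 \ge \dots$ on its diagonal, and split $A = A' + A''$, where $A' = \sum_{p\le k+j} \sigma_p\,u_p\,v_p^\T$ collects the terms for the $k+j$ largest singular values and $A'' = A - A'$ collects the rest, so that $A'$ has rank at most $k+j$ and $\|A''\| = \sigma_{k+j+1}$ (this being the best rank-$(k+j)$ approximation to $A$, as recalled around~(\ref{low_rank_approx})). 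From the characterization $\rho_{k+1}(M) = \min_{\mathrm{rank}(W)\le k}\|M-W\|$ one obtains the sub-additivity $\rho_{k+1}(G\,A) \le \rho_{k+1}(G\,A') + \|G\,A''\|$: choosing a rank-$\le k$ matrix $W_0$ with $\|G\,A' - W_0\| = \rho_{k+1}(G\,A')$ and using it as a competitor for $G\,A' + G\,A''$ gives the bound by the triangle inequality. It therefore suffices to bound $\rho_{k+1}(G\,A')$ and $\|G\,A''\|$ separately.

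For the second term, write $A'' = U''\,D''\,(V'')^\T$, where the columns of $U''$ are left singular vectors of $A$ for the singular values $\sigma_{k+j+1}, \sigma_{k+j+2}, \dots$ (there being at most $m-k-j$ of them), $D''$ is the diagonal matrix of those singular values (so $\|D''\| = \sigma_{k+j+1}$), and the columns of $V''$ are the corresponding right singular vectors; since the columns of $U''$ are orthonormal, $H'' := G\,U''$ is a matrix of i.i.d.\ standard Gaussian entries with $l$ rows and at most $m-k-j$ columns (by rotational invariance of the multivariate normal distribution, applied to the rows of $G$), whence $\|G\,A''\| \le \|H''\|\,\sigma_{k+j+1}$, and Lemma~\ref{greatest_value} invoked with ambient dimension $\max(m-k-j,l)$ bounds $\|H''\|$ by $\sqrt{2\max(m-k-j,l)}\;\gamma$ except on an event of probability at most the first quantity subtracted in~(\ref{probability3}). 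For the first term, write $A' = U'\,D'\,(V')^\T$ with $U'$ the $m\times(k+j)$ matrix of the left singular vectors $u_1,\dots,u_{k+j}$, $D' = \mathrm{diag}(\sigma_1,\dots,\sigma_{k+j})$, and the columns of $V'$ the corresponding right singular vectors; then $H' := G\,U'$ is an $l\times(k+j)$ matrix of i.i.d.\ standard Gaussian entries, and $\rho_{k+1}(G\,A') = \rho_{k+1}(H'\,D')$ since right multiplication by $(V')^\T$ (which has orthonormal rows) leaves singular values unchanged. Letting $D'_{(k)}$ agree with $D'$ in its $k$ largest diagonal entries and vanish elsewhere, $H'\,D'_{(k)}$ has rank at most $k$, so $\rho_{k+1}(H'\,D') \le \|H'\,D' - H'\,D'_{(k)}\| = \|H'\,(D' - D'_{(k)})\| \le \|H'\|\,\sigma_{k+1}$, and Lemma~\ref{greatest_value} invoked with ambient dimension $\max(k+j,l)$ bounds $\|H'\|$ by $\sqrt{2\max(k+j,l)}\;\gamma$ except on an event of probability at most the second quantity subtracted in~(\ref{probability3}). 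A union bound over the two exceptional events then gives~(\ref{stretched_singular_value}) with probability at least $\Xi$.

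Almost every step here is mechanical. The two points needing care are: (i) the assertion that $G$ times a matrix with orthonormal columns is again a matrix of i.i.d.\ standard Gaussian entries --- this is exactly the rotational invariance of the standard multivariate normal, applied row by row; and (ii) the bookkeeping of the two failure probabilities, which must add up to precisely the two quantities subtracted in~(\ref{probability3}), and which thereby dictates the choices of ambient dimension $\max(k+j,l)$ and $\max(m-k-j,l)$ when Lemma~\ref{greatest_value} is applied. The one genuinely substantive choice is to truncate $D'$ (rather than $D''$) at index $k$: this is what makes $\rho_{k+1}(H'\,D')$ come out proportional to $\sigma_{k+1}$ instead of to the far larger $\sigma_1$, and it is the step that fixes the shape of~(\ref{stretched_singular_value}).
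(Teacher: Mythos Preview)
Your argument is correct. The paper itself does not supply a proof of this lemma; it simply quotes the result as Lemma~20 of~\cite{martinsson-rokhlin-tygert3}. The route you take --- SVD of $A$, splitting at index $k+j$, the sub-additivity $\rho_{k+1}(GA) \le \rho_{k+1}(GA') + \|GA''\|$ via the variational characterization of $\rho_{k+1}$, rotational invariance to turn $G\,U'$ and $G\,U''$ into genuine i.i.d.\ Gaussian blocks, two invocations of Lemma~\ref{greatest_value} with ambient dimensions $\max(k+j,l)$ and $\max(m-k-j,l)$, and a union bound --- is exactly the standard argument, and it is the one used in the cited reference.
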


The following corollary follows immediately from the preceding lemma,
by replacing the matrix $A$ with $(A \, A^\T)^i \, A$,
the integer $k$ with $j$, and the integer $j$ with $k-j$.

\begin{corollary}
\label{singular_value_stretching2}
Suppose $i$, $j$, $k$, $l$, $m$, and~$n$ are positive integers
with $j < k < l < m \le n$.
Suppose further that $A$ is a real $m \times n$ matrix,
$G$ is a real $l \times m$ matrix whose entries are
i.i.d.\ Gaussian random variables of zero mean and unit variance,
and $\gamma$ is a positive real number, such that
$\gamma > 1$ and
\begin{multline}
\label{probability32}
\Psi
  = 1 - \frac{1}{4 \, (\gamma^2-1) \, \sqrt{\pi \, \max(m-k,l) \, \gamma^2}}
    \left( \frac{2 \gamma^2}{e^{\gamma^2-1}} \right)^{\max(m-k,\,l)} \\
  - \frac{1}{4 \, (\gamma^2-1) \, \sqrt{\pi \, l \; \gamma^2}}
    \left( \frac{2 \gamma^2}{e^{\gamma^2-1}} \right)^l
\end{multline}
is nonnegative.

Then,
\begin{equation}
\label{stretched_singular_value2}
\rho_{j+1} \le \sqrt{2 l} \; \gamma \; (\sigma_{j+1})^{2i+1}
             + \sqrt{2 \, \max(m-k,l)} \; \gamma \; (\sigma_{k+1})^{2i+1}
\end{equation}
with probability not less than $\Psi$ defined in~(\ref{probability32}),
where $\rho_{j+1}$ is the $(j+1)^\ist$ greatest singular value
of $G \, (A \, A^\T)^i \, A$,
$\sigma_{j+1}$ is the $(j+1)^\ist$ greatest singular value of $A$,
and $\sigma_{k+1}$ is the $(k+1)^\ist$ greatest singular value of $A$.
\end{corollary}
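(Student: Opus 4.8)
The plan is to apply Lemma~\ref{singular_value_stretching} verbatim after the substitution described in the statement of the corollary, and then to simplify the resulting bound. Concretely, I would set $\hat{A} = (A\,A^\T)^i\,A$, so that $\hat{A}$ is a real $m \times n$ matrix, and invoke Lemma~\ref{singular_value_stretching} with $\hat{A}$ in place of $A$, with the integer $j$ there taken to be our $j$, and with the integer $k$ there taken to be our $j$ as well --- wait, more precisely, following the instruction, I replace the ``$k$'' of the lemma by our $j$ and the ``$j$'' of the lemma by $k-j$. Then the lemma's hypothesis $k<l$ becomes $j<l$, which holds since $j<k<l$; its hypothesis $k+j<m$ becomes $j+(k-j)=k<m$, which holds; and $k+j<n$ becomes $k<n$, which holds since $k<m\le n$. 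So the hypotheses of Lemma~\ref{singular_value_stretching} are met under the hypotheses of the corollary.

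Next I would track the quantities through the substitution. In the lemma, $\max(k+j,l)$ becomes $\max(j+(k-j),l)=\max(k,l)=l$ since $k<l$; and $\max(m-k-j,l)$ becomes $\max(m-k,l)$. Hence the probability expression $\Xi$ in~(\ref{probability3}) turns into exactly $\Psi$ in~(\ref{probability32}), so the nonnegativity hypothesis and the probability bound match. For the singular-value bound~(\ref{stretched_singular_value}): $\rho_{k+1}$ (the $(k+1)^\st$ singular value of $G\,\hat{A}$) becomes $\rho_{j+1}$ (the $(j+1)^\st$ singular value of $G\,(A\,A^\T)^i\,A$); the factor $\sqrt{2\max(k+j,l)}=\sqrt{2l}$; $\sigma_{k+1}$ (singular value of $\hat{A}$) becomes the $(j+1)^\st$ singular value of $(A\,A^\T)^i\,A$; the factor $\sqrt{2\max(m-k-j,l)}=\sqrt{2\max(m-k,l)}$; and $\sigma_{k+j+1}$ (singular value of $\hat{A}$) becomes the $(k+1)^\st$ singular value of $(A\,A^\T)^i\,A$.

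The last step is to express the singular values of $(A\,A^\T)^i\,A$ in terms of those of $A$. Since $A=U\Sigma V^\T$ is an SVD of $A$, one has $(A\,A^\T)^i\,A = U\,\Sigma^{2i+1}\,V^\T$ (more carefully, $A\,A^\T = U\,\Sigma\Sigma^\T\,U^\T$, and raising to the $i$th power and multiplying by $A$ gives $U$ times the $m\times n$ diagonal matrix with entries $\sigma_r^{2i+1}$ times $V^\T$), so the $r^\th$ greatest singular value of $(A\,A^\T)^i\,A$ is $(\sigma_r)^{2i+1}$. Substituting $(\sigma_{j+1})^{2i+1}$ and $(\sigma_{k+1})^{2i+1}$ into the transformed bound yields exactly~(\ref{stretched_singular_value2}), and the probability bound $\Psi$ is already in hand. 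I do not anticipate a genuine obstacle here; the only point requiring a little care is bookkeeping the index substitution correctly and checking that each ``max'' collapses as claimed (in particular using $k<l$ to get $\max(k,l)=l$), together with the elementary SVD identity for the singular values of $(A\,A^\T)^i\,A$. These are routine, so the corollary follows immediately, as asserted.
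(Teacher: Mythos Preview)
Your proposal is correct and follows exactly the approach indicated in the paper: apply Lemma~\ref{singular_value_stretching} with $A$ replaced by $(A\,A^\T)^i\,A$, the lemma's $k$ replaced by $j$, and the lemma's $j$ replaced by $k-j$, then use $k<l$ to collapse $\max(k,l)=l$ and the SVD identity to get the $(2i+1)$st powers of the singular values. The bookkeeping you carried out is precisely what the paper leaves implicit.
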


\section{Description of the algorithm}
\label{algorithm}

In this section, we describe the algorithm of the present paper,
providing details about its accuracy and computational costs.
Subsection~\ref{main_algorithm} describes the basic algorithm.
Subsection~\ref{costs} tabulates the computational costs of the algorithm.
Subsection~\ref{modified} describes a complementary algorithm.
Subsection~\ref{blanczos} describes a computationally more expensive variant
that is somewhat more accurate and tolerant to roundoff.

\subsection{The algorithm}
\label{main_algorithm}

Suppose that $i$, $k$, $m$, and $n$ are positive integers
with $2k < m \le n$, and $A$ is a real $m \times n$ matrix.
In this subsection, we will construct an approximation to an SVD of $A$
such that
\begin{equation}
\label{sort_of_svd}
\| A - U \, \Sigma \, V^\T \| \le C \, m^{1/(4i+2)} \, \sigma_{k+1}
\end{equation}
with very high probability,
where $U$ is a real $m \times k$ matrix
whose columns are orthonormal,
$V$ is a real $n \times k$ matrix whose columns are orthonormal,
$\Sigma$ is a real diagonal $k \times k$ matrix
whose entries are all nonnegative,
$\sigma_{k+1}$ is the $(k+1)^\st$ greatest singular value of $A$,
and $C$ is a constant independent of $A$ that depends on the parameters
of the algorithm.
(Section~\ref{numerical} will give an empirical indication of the size of $C$,
and~(\ref{explicit_eval}) will give one of our best theoretical estimates
to date.)

Intuitively, we could apply $A^\T$ to several random vectors,
in order to identify the part of its range corresponding
to the larger singular values.
To enhance the decay of the singular values,
we apply $A^\T \, (A \, A^\T)^i$ instead.
Once we have identified most of the range of $A^\T$,
we perform several linear-algebraic manipulations in order to recover
an approximation to $A$.
(It is possible to obtain a similar, somewhat less accurate algorithm
by substituting our short, fat matrix $A$ for $A^\T$, and $A^\T$ for $A$.)

More precisely, we choose an integer $l > k$ such that $l \le m-k$
(for example, $l = k + 12$), and make the following five steps:

\begin{enumerate}
\item[1.] Using a random number generator,
form a real $l \times m$ matrix $G$ whose entries are
i.i.d.\ Gaussian random variables of zero mean and unit variance,
and compute the $l \times n$ product matrix
\begin{equation}
\label{product2}
R = G \, (A \, A^\T)^i \, A.
\end{equation}
\item[2.] Using an SVD,
form a real $n \times k$ matrix $Q$ whose columns are orthonormal,
such that there exists a real $k \times l$ matrix $S$ for which
\begin{equation}
\label{good_approx2}
\| Q \, S - R^\T \| \le \rho_{k+1},
\end{equation}
where $\rho_{k+1}$ is the $(k+1)^\st$ greatest singular value of $R$.
(See Observation~\ref{least_squares} for details concerning
the construction of such a matrix $Q$.)
\item[3.] Compute the $m \times k$ product matrix
\begin{equation}
\label{product_t}
T = A \, Q.
\end{equation}
\item[4.] Form an SVD of $T$,
\begin{equation}
\label{svd_small}
T = U \, \Sigma \, W^\T,
\end{equation}
where $U$ is a real $m \times k$ matrix whose columns are orthonormal,
$W$ is a real $k \times k$ matrix whose columns are orthonormal,
and $\Sigma$ is a real diagonal $k \times k$ matrix
whose entries are all nonnegative.
(See, for example, Chapter~8 in~\cite{golub-van_loan} for details
concerning the construction of such an SVD.)
\item[5.] Compute the $n \times k$ product matrix
\begin{equation}
\label{product3}
V = Q \, W.
\end{equation}
\end{enumerate}

The following theorem states precisely
that the matrices $U$, $\Sigma$, and $V$ satisfy~(\ref{sort_of_svd}).
See~(\ref{explicit_eval}) for a more compact (but less general) formulation.

\begin{theorem}
\label{the_theorem}
Suppose that $i$, $k$, $l$, $m$, and $n$ are positive integers
with $k < l \le m-k$ and $m \le n$, and $A$ is a real $m \times n$ matrix.
Suppose further that $\beta$ and $\gamma$ are positive real numbers
such that $\gamma>1$,
\begin{equation}
\label{monotonicity_assump}
(l-k+1) \, \beta \ge 1,
\end{equation}
\begin{equation}
\label{simplifying_assump}
2 \, l^2 \, \gamma^2 \, \beta^2 \ge 1,
\end{equation}
and
\begin{multline}
\label{final_prob}
\Pi = 1 - \frac{1}{2 \, (\gamma^2-1) \, \sqrt{\pi \, (m-k) \, \gamma^2}}
      \left( \frac{2 \gamma^2}{e^{\gamma^2-1}} \right)^{m-k}
    - \frac{1}{2 \, (\gamma^2-1) \, \sqrt{\pi \, l \; \gamma^2}}
      \left( \frac{2 \gamma^2}{e^{\gamma^2-1}} \right)^l \\
    - \frac{1}{\sqrt{2 \pi \, (l-k+1)}}
   \, \left( \frac{e}{(l-k+1) \, \beta} \right)^{l-k+1}
\end{multline}
is nonnegative.
Suppose in addition that $U$, $\Sigma$, and $V$ are the matrices
produced via the five-step algorithm of the present subsection, given above.

Then,
\begin{equation}
\label{the_point}
\| A - U \, \Sigma \, V^\T \| \le 16 \, \gamma \, \beta \, l
\, \left(\frac{m-k}{l}\right)^{1/(4i+2)} \, \sigma_{k+1}
\end{equation}
with probability not less than $\Pi$,
where $\Pi$ is defined in~(\ref{final_prob}),
and $\sigma_{k+1}$ is the $(k+1)^\ist$ greatest singular value of $A$.
\end{theorem}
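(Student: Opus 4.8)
The plan is to reduce the statement to a bound on $\|A - A\,Q\,Q^\T\|$ and then to balance the estimates supplied by Sections~\ref{prelims}--\ref{apparatus}. First I would unwind steps~3--5 of the algorithm: since $T = A\,Q = U\,\Sigma\,W^\T$ and $V = Q\,W$, we have $U\,\Sigma\,V^\T = U\,\Sigma\,W^\T\,Q^\T = A\,Q\,Q^\T$, so it suffices to bound $\|A - A\,Q\,Q^\T\| = \|A\,(\Id - Q\,Q^\T)\|$. Because $\Id - Q\,Q^\T$ is an orthogonal projection, this norm is at most $\|A\| = \sigma_1$; hence whenever $\sigma_1 \le \left(\frac{m-k}{l}\right)^{1/(4i+2)} \sigma_{k+1}$ the desired inequality holds trivially, using $16\,\gamma\,\beta\,l \ge 1$ (which follows from $\gamma > 1$ and $(l-k+1)\,\beta \ge 1$). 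So I would henceforth assume $\sigma_1 > \left(\frac{m-k}{l}\right)^{1/(4i+2)} \sigma_{k+1}$.

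Next I would apply Lemma~\ref{all_together2} to the matrices $Q$, $S$, $G$ of the algorithm, with $R = G\,(A\,A^\T)^i\,A$ and a matrix $F$ still to be chosen. Step~2 gives $\|Q\,S - R^\T\| \le \rho_{k+1}$, the $(k+1)^\st$ singular value of $R$, so Lemma~\ref{all_together2} yields $\|A - A\,Q\,Q^\T\| \le 2\,\|F\,G\,(A\,A^\T)^i\,A - A\| + 2\,\|F\|\,\rho_{k+1}$. I would control $\rho_{k+1}$ (or the larger $\rho_{j+1}$) by applying Corollary~\ref{singular_value_stretching2} and Lemma~\ref{singular_value_stretching} to $(A\,A^\T)^i\,A$, whose singular values are $(\sigma_1)^{2i+1}, (\sigma_2)^{2i+1}, \dots$; this produces estimates of the shape $\rho_{j+1} \le \sqrt{2l}\,\gamma\,(\sigma_{j+1})^{2i+1} + \sqrt{2(m-k)}\,\gamma\,(\sigma_{k+1})^{2i+1}$. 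The matrix $F$, together with $\|F\,G\,(A\,A^\T)^i\,A - A\| \le \sqrt{2l^2\beta^2\gamma^2+1}\,\sigma_{j+1} + \sqrt{2l(m-k)\beta^2\gamma^2(\sigma_{k+1}/\sigma_j)^{4i}+1}\,\sigma_{k+1}$ and $\|F\| \le \sqrt{l}\,\beta/(\sigma_j)^{2i}$, comes from Lemma~\ref{probability_bounds2}.

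The heart of the proof is the choice of the index $j$ fed to these lemmas. I would set the threshold $\tau = \left(\frac{m-k}{l}\right)^{1/(4i+2)} \sigma_{k+1}$ and let $j$ be the largest index with $\sigma_j > \tau$; since $\sigma_1 > \tau \ge \sigma_{k+1}$ we get $1 \le j \le k$ and $\sigma_{j+1} \le \tau \le \sigma_j$. If $j \le k-1$, I would invoke Lemma~\ref{probability_bounds2} and Corollary~\ref{singular_value_stretching2} with this $j$ (and with their ``$k$'' equal to the algorithm's $k$); if $j = k$ --- i.e.\ there is a large spectral gap at position $k$ --- I would instead invoke Lemma~\ref{probability_bounds2} with index $j = k$ and with its ``$k$'' replaced by $k+1$, and bound $\rho_{k+1}$ directly via Lemma~\ref{singular_value_stretching}. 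In both situations one substitutes the three displayed bounds into the inequality from Lemma~\ref{all_together2}, uses $\sigma_{j+1} \le \tau$ to bound the $\sigma_{j+1}$ terms, uses $\sigma_j \ge \tau$ to bound each ratio $\sigma_{k+1}/\sigma_j \le (l/(m-k))^{1/(4i+2)}$ and hence each $(\sigma_{k+1}/\sigma_j)^{2i}$, simplifies the $\sqrt{\,\cdot\,+1\,}$ factors with $2\,l^2\,\gamma^2\,\beta^2 \ge 1$, and collects terms; the exponents conspire (via $\tfrac12 - \tfrac{2i}{4i+2} = \tfrac{1}{4i+2}$) so that every surviving summand is a small constant times $\gamma\,\beta\,l\left(\frac{m-k}{l}\right)^{1/(4i+2)}\sigma_{k+1}$, the constants summing to $4 + 8\sqrt{2} < 16$.

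Finally, the probability $\Pi$ would be obtained by a union bound over the failure events of the lemmas invoked. Here Lemma~\ref{monotonicity} is used to replace the $\beta$-dependent failure term evaluated at $l - j + 1$ by the one at $l - k + 1$ appearing in $\Pi$ --- legitimate because $(l-k+1)\,\beta \ge 1$ places us on the decreasing branch --- and the analogous monotonicity of $x \mapsto x^{-1/2}\,(2\gamma^2/e^{\gamma^2-1})^x$ is used to absorb the $\max(m-k-1,l)$-exponent terms into the $(m-k)$-exponent term of $\Pi$. I expect the main obstacle to be precisely the choice of $j$ and the ensuing case split: getting the exponent $1/(4i+2)$ to emerge exactly, and recognizing that a large gap at index $k$ cannot be absorbed using any $j < k$ and must instead be handled by shifting ``$k$'' to $k+1$ in Lemma~\ref{probability_bounds2}; the bookkeeping of constants and of the three coupled failure probabilities is routine but delicate.
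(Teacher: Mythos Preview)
Your proposal is correct and mirrors the paper's proof: reduce to $\|A - A\,Q\,Q^\T\|$, dispose of the trivial case $\sigma_1 \le \tau$, choose $j$ via the threshold $\tau = \left(\tfrac{m-k}{l}\right)^{1/(4i+2)}\sigma_{k+1}$, feed Lemma~\ref{all_together2} the estimates from Lemma~\ref{probability_bounds2} and Corollary~\ref{singular_value_stretching2}, and tidy up the failure probability with Lemma~\ref{monotonicity}. The one deviation --- your case split at $j = k$ with the shift $k \to k+1$ --- is absent from the paper, which simply applies both lemmas with $j \le k$ (the boundary $j = k$ being a degenerate instance of their proofs, the middle block in the appendix construction collapsing); your shift would moreover need $k+1 < l$, which is not assumed, so it is cleaner to drop the split.
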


\begin{proof}
Observing that $U \, \Sigma \, V^\T = A \, Q \, Q^\T$,
it is sufficient to prove that
\begin{equation}
\label{intermediate_step}
\| A \, Q \, Q^\T - A \| \le 16 \, \gamma \, \beta \, l
\, \left(\frac{m-k}{l}\right)^{1/(4i+2)} \, \sigma_{k+1}
\end{equation}
with probability $\Pi$,
where $Q$ is the matrix from~(\ref{good_approx2}),
since combining~(\ref{intermediate_step}), (\ref{product_t}),
(\ref{svd_small}), and~(\ref{product3}) yields~(\ref{the_point}).
We now prove~(\ref{intermediate_step}).

First, we consider the case when
\begin{equation}
\label{first_case}
\| A \| \le \left(\frac{m-k}{l}\right)^{1/(4i+2)} \, \sigma_{k+1}.
\end{equation}
Clearly,
\begin{equation}
\label{triangle_submult}
\| A \, Q \, Q^\T - A \| \le \| A \| \, \| Q \| \, \| Q^\T \| + \| A \|.
\end{equation}
But, it follows from the fact that the columns of $Q$ are orthonormal that
\begin{equation}
\label{normortho1}
\| Q \| \le 1
\end{equation}
and
\begin{equation}
\label{normortho2}
\| Q^\T \| \le 1.
\end{equation}
Combining~(\ref{triangle_submult}), (\ref{normortho1}), (\ref{normortho2}),
(\ref{first_case}), and~(\ref{simplifying_assump})
yields~(\ref{intermediate_step}), completing the proof
for the case when~(\ref{first_case}) holds.

For the remainder of the proof, we consider the case when
\begin{equation}
\label{second_case}
\| A \| > \left(\frac{m-k}{l}\right)^{1/(4i+2)} \, \sigma_{k+1}.
\end{equation}
To prove~(\ref{intermediate_step}),
we will use~(\ref{reconstruction2})
(which is restated and proven in Lemma~\ref{all_together22} in the appendix),
namely,
\begin{equation}
\label{basic_bound}
\| A \, Q \, Q^\T - A \|
\le 2 \, \| F \, G \, (A \, A^\T)^i \, A - A \|
  + 2 \, \| F \| \, \| Q \, S - (G \, (A \, A^\T)^i \, A)^\T \|
\end{equation}
for any real $m \times l$ matrix $F$,
where $G$ is from~(\ref{product2}),
and $Q$ and $S$ are from~(\ref{good_approx2}).
We now choose an appropriate matrix $F$.

First, we define $j$ to be the positive integer such that
\begin{equation}
\label{reduced_rank}
\sigma_{j+1} \le \left(\frac{m-k}{l}\right)^{1/(4i+2)} \, \sigma_{k+1}
               < \sigma_j,
\end{equation}
where $\sigma_j$ is the $j^\th$ greatest singular value of $A$,
and $\sigma_{j+1}$ is the $(j+1)^\st$ greatest
(such an integer $j$ exists due to~(\ref{second_case})
and the supposition of the theorem that $l \le m-k$).
We then use the matrix $F$ from~(\ref{approximation2})
and~(\ref{small_norm2}) associated with this integer $j$, so that
(as stated in~(\ref{approximation2}) and~(\ref{small_norm2}),
which are restated and proven in Lemma~\ref{probability_bounds22}
in the appendix)
\begin{multline}
\label{number1}
\| F \, G \, (A \, A^\T)^i \, A - A \|
\le \sqrt{ 2 l^2 \, \beta^2 \, \gamma^2 + 1 }
 \;\; \sigma_{j+1} \\
  + \sqrt{ 2 l \, \max(m-k,l) \, \beta^2 \, \gamma^2
        \, \left( \frac{\sigma_{k+1}}{\sigma_j} \right)^{4i} + 1 }
 \;\; \sigma_{k+1}
\end{multline}
and
\begin{equation}
\label{number2}
\| F \| \le \frac{\sqrt{l} \; \beta}{(\sigma_j)^{2i}}
\end{equation}
with probability not less than $\Phi$ defined in~(\ref{probability2}).
Formula~(\ref{number1}) bounds the first term in the right-hand side
of~(\ref{basic_bound}).

To bound the second term in the right-hand side of~(\ref{basic_bound}),
we observe that $j \le k$, due to~(\ref{reduced_rank})
and the supposition of the theorem that $l \le m-k$.
Combining~(\ref{good_approx2}), (\ref{product2}),
(\ref{stretched_singular_value2}), and the fact that $j \le k$ yields
\begin{equation}
\label{number3}
\| Q \, S - (G \, (A \, A^\T)^i \, A)^\T \|
\le \sqrt{2 l} \; \gamma \; (\sigma_{j+1})^{2i+1}
  + \sqrt{2 \, \max(m-k,l)} \; \gamma \; (\sigma_{k+1})^{2i+1}
\end{equation}
with probability not less than $\Psi$ defined in~(\ref{probability32}).
Combining~(\ref{number2}) and~(\ref{number3}) yields
\begin{multline}
\label{number4}
\| F \| \, \| Q \, S - (G \, (A \, A^\T)^i \, A)^\T \|
\le \sqrt{2 \, l^2 \, \gamma^2 \, \beta^2} \; \sigma_{j+1} \\
  + \sqrt{2 \, l \, \max(m-k,l) \, \gamma^2 \, \beta^2
 \, \left(\frac{\sigma_{k+1}}{\sigma_j}\right)^{4i}} \;\; \sigma_{k+1}
\end{multline}
with probability not less than $\Pi$ defined in~(\ref{final_prob}).
The combination of Lemma~\ref{monotonicity}, (\ref{monotonicity_assump}),
and the fact that $j \le k$ justifies the use of $k$
(rather than the $j$ used in~(\ref{probability2}) for $\Phi$)
in the last term in the right-hand side of~(\ref{final_prob}).

Combining~(\ref{basic_bound}), (\ref{number1}), (\ref{number4}),
(\ref{reduced_rank}), (\ref{simplifying_assump}),
and the supposition of the theorem that $l \le m-k$
yields~(\ref{intermediate_step}), completing the proof.
\end{proof}

\begin{remark}
\label{par_remark}
Choosing~$l=k+12$, $\beta = 2.57$, and $\gamma = 2.43$ in~(\ref{final_prob})
and~(\ref{the_point}) yields
\begin{equation}
\label{explicit_eval}
\| A - U \, \Sigma \, V^\T \| \le 100 \, l
\, \left(\frac{m-k}{l}\right)^{1/(4i+2)} \, \sigma_{k+1}
\end{equation}
with probability greater than $1-10^{-15}$,
where $\sigma_{k+1}$ is the $(k+1)^\st$ greatest singular value of $A$.
Numerical experiments (some of which are reported in Section~\ref{numerical})
indicate that the factor $100 l$ in the right-hand side
of~(\ref{explicit_eval}) is much greater than necessary.
\end{remark}

\begin{remark}
\label{six-step}
Above, we permit $l$ to be any integer greater than $k$.
Stronger theoretical bounds on the accuracy are available when $l \ge 2k$.
Indeed, via an analysis similar to the proof of Theorem~\ref{the_theorem}
(using in addition the result stated in the abstract of~\cite{chen-dongarra}),
it can be shown that the following six-step algorithm with $l \ge 2k$
produces matrices $U$, $\Sigma$, and $V$ satisfying the bound~(\ref{the_point})
with its right-hand side reduced by a factor of $\sqrt{l}$:
\begin{enumerate}
\item[1.] Using a random number generator,
form a real $l \times m$ matrix $G$ whose entries are
i.i.d.\ Gaussian random variables of zero mean and unit variance,
and compute the $l \times n$ product matrix
\begin{equation}
\label{product2a}
R = G \, (A \, A^\T)^i \, A.
\end{equation}
\item[2.] Using a pivoted $QR$-decomposition algorithm,
form a real $n \times l$ matrix $Q$ whose columns are orthonormal,
such that there exists a real $l \times l$ matrix $S$ for which
\begin{equation}
\label{good_approx2a}
R^\T = Q \, S.
\end{equation}
(See, for example, Chapter~5 in~\cite{golub-van_loan} for details concerning
the construction of such a matrix $Q$.)
\item[3.] Compute the $m \times l$ product matrix
\begin{equation}
\label{product_ta}
T = A \, Q.
\end{equation}
\item[4.] Form an SVD of $T$,
\begin{equation}
\label{svd_smalla}
T = \tilde{U} \, \tilde{\Sigma} \, W^\T,
\end{equation}
where $\tilde{U}$ is a real $m \times l$ matrix whose columns are orthonormal,
$W$ is a real $l \times l$ matrix whose columns are orthonormal,
and $\tilde{\Sigma}$ is a real diagonal $l \times l$ matrix
whose only nonzero entries are nonnegative and appear in nonincreasing order
on the diagonal.
(See, for example, Chapter~8 in~\cite{golub-van_loan} for details
concerning the construction of such an SVD.)
\item[5.] Compute the $n \times l$ product matrix
\begin{equation}
\label{product3a}
\tilde{V} = Q \, W.
\end{equation}
\item[6.] Extract the leftmost $m \times k$ block $U$ of $\tilde{U}$,
the leftmost $n \times k$ block $V$ of $\tilde{V}$,
and the leftmost uppermost $k \times k$ block $\Sigma$ of $\tilde{\Sigma}$.
\end{enumerate}
\end{remark}

\subsection{Computational costs}
\label{costs}

In this subsection, we tabulate the number of floating-point operations
required by the five-step algorithm described
in Subsection~\ref{main_algorithm} as applied once to a matrix $A$.

The algorithm incurs the following costs
in order to compute an approximation to an SVD of $A$:
\begin{enumerate}
\item[1.] Forming $R$ in~(\ref{product2}) requires applying $A$
          to $il$ column vectors, and $A^\T$ to $(i+1) \, l$ column vectors.
\item[2.] Computing $Q$ in~(\ref{good_approx2})
          costs~$\bigoh(l^2 \, n)$.
\item[3.] Forming $T$ in~(\ref{product_t}) requires applying $A$
          to $k$ column vectors.
\item[4.] Computing the SVD~(\ref{svd_small}) of $T$ costs~$\bigoh(k^2 \, m)$.
\item[5.] Forming $V$ in~(\ref{product3}) costs~$\bigoh(k^2 \, n)$.
\end{enumerate}
Summing up the costs in Steps 1--5 above,
and using the fact that $k \le l \le m \le n$,
we conclude that the algorithm of Subsection~\ref{main_algorithm} costs
\begin{equation}
\label{svd_costs}
C_{\rm PCA} = (il+k) \cdot C_A + (il+l) \cdot C_{A^\tinyT} + \bigoh(l^2 \, n)
\end{equation}
floating-point operations,
where $C_A$ is the cost of applying $A$ to a real $n \times 1$ column vector,
and $C_{A^\tinyT}$ is the cost of applying $A^\T$
to a real $m \times 1$ column vector.

\begin{remark}
We observe that the algorithm
only requires applying $A$ to $il+k$ vectors and $A^\T$ to $il+l$ vectors;
it does not require explicit access to the individual entries of $A$.
This consideration can be important when $A$ and $A^\T$ are available solely
in the form of procedures for their applications to arbitrary vectors.
Often such procedures for applying $A$ and $A^\T$ cost much less than
the standard procedure for applying a dense matrix to a vector.
\end{remark}

\subsection{A modified algorithm}
\label{modified}

In this subsection, we describe a simple modification
of the algorithm described in Subsection~\ref{main_algorithm}.
Again, suppose that $i$, $k$, $l$, $m$, and $n$ are positive integers
with $k < l \le m-k$ and $m \le n$, and $A$ is a real $m \times n$ matrix.
Then, the following five-step algorithm constructs an approximation
to an SVD of $A^\T$ such that
\begin{equation}
\label{sort_of_svdmod}
\| A^\T - U \, \Sigma \, V^\T \| \le C \, m^{1/(4i)} \, \sigma_{k+1}
\end{equation}
with very high probability,
where $U$ is a real $n \times k$ matrix whose columns are orthonormal,
$V$ is a real $m \times k$ matrix whose columns are orthonormal,
$\Sigma$ is a real diagonal $k \times k$ matrix
whose entries are all nonnegative,
$\sigma_{k+1}$ is the $(k+1)^\st$ greatest singular value of $A$,
and $C$ is a constant independent of $A$ that depends on the parameters
of the algorithm:

\begin{enumerate}
\item[1.] Using a random number generator,
form a real $l \times m$ matrix $G$ whose entries are
i.i.d.\ Gaussian random variables of zero mean and unit variance,
and compute the $l \times m$ product matrix
\begin{equation}
\label{product2mod}
R = G \, (A \, A^\T)^i.
\end{equation}
\item[2.] Using an SVD,
form a real $m \times k$ matrix $Q$ whose columns are orthonormal,
such that there exists a real $k \times l$ matrix $S$ for which
\begin{equation}
\label{good_approx2mod}
\| Q \, S - R^\T \| \le \rho_{k+1},
\end{equation}
where $\rho_{k+1}$ is the $(k+1)^\st$ greatest singular value of $R$.
(See Observation~\ref{least_squares} for details concerning
the construction of such a matrix $Q$.)
\item[3.] Compute the $n \times k$ product matrix
\begin{equation}
\label{product_tmod}
T = A^\T \, Q.
\end{equation}
\item[4.] Form an SVD of $T$,
\begin{equation}
\label{svd_smallmod}
T = U \, \Sigma \, W^\T,
\end{equation}
where $U$ is a real $n \times k$ matrix whose columns are orthonormal,
$W$ is a real $k \times k$ matrix whose columns are orthonormal,
and $\Sigma$ is a real diagonal $k \times k$ matrix
whose entries are all nonnegative.
(See, for example, Chapter~8 in~\cite{golub-van_loan} for details
concerning the construction of such an SVD.)
\item[5.] Compute the $m \times k$ product matrix
\begin{equation}
\label{product3mod}
V = Q \, W.
\end{equation}
\end{enumerate}

Clearly, (\ref{sort_of_svdmod}) is similar to~(\ref{sort_of_svd}),
as~(\ref{product2mod}) is similar to~(\ref{product2}).

\begin{remark}
The ideas of Remark~\ref{six-step}
are obviously relevant to the algorithm of the present subsection, too.
\end{remark}

\subsection{Blanczos}
\label{blanczos}

In this subsection, we describe a modification of the algorithm
of Subsection~\ref{main_algorithm}, enhancing the accuracy
at a little extra computational expense.
Suppose that $i$, $k$, $l$, $m$, and $n$ are positive integers
with $k < l$ and $(i+1)l \le m-k$, and $A$ is a real $m \times n$ matrix,
such that $m \le n$.
Then, the following five-step algorithm constructs an approximation
$U \, \Sigma \, V^\T$ to an SVD of $A$:

\begin{enumerate}
\item[1.] Using a random number generator,
form a real $l \times m$ matrix $G$ whose entries are
i.i.d.\ Gaussian random variables of zero mean and unit variance,
and compute the $l \times n$ matrices
$R^{(0)}$, $R^{(1)}$, \dots, $R^{(i-1)}$, $R^{(i)}$
defined via the formulae
\begin{equation}
R^{(0)} = G \, A,
\end{equation}
\begin{equation}
R^{(1)} = R^{(0)} \, A^T \, A,
\end{equation}
\begin{equation}
R^{(2)} = R^{(1)} \, A^T \, A,
\end{equation}
\begin{equation*}
\vdots
\end{equation*}
\begin{equation}
R^{(i-1)} = R^{(i-2)} \, A^T \, A,
\end{equation}
\begin{equation}
R^{(i)} = R^{(i-1)} \, A^T \, A.
\end{equation}
Form the $((i+1)l) \times n$ matrix
\begin{equation}
\label{product23}
R = \left(\begin{array}{c} R^{(0)} \\ R^{(1)} \\ \vdots \\ R^{(i-1)} \\ R^{(i)}
\end{array}\right).
\end{equation}
\item[2.] Using a pivoted $QR$-decomposition algorithm,
form a real $n \times ((i+1)l)$ matrix $Q$ whose columns are orthonormal,
such that there exists a real $((i+1)l) \times ((i+1)l)$ matrix $S$ for which
\begin{equation}
\label{good_approx23}
R^\T = Q \, S.
\end{equation}
(See, for example, Chapter~5 in~\cite{golub-van_loan} for details concerning
the construction of such a matrix $Q$.)
\item[3.] Compute the $m \times ((i+1)l)$ product matrix
\begin{equation}
\label{product_t3}
T = A \, Q.
\end{equation}
\item[4.] Form an SVD of $T$,
\begin{equation}
\label{svd_small3}
T = U \, \Sigma \, W^\T,
\end{equation}
where $U$ is a real $m \times ((i+1)l)$ matrix whose columns are orthonormal,
$W$ is a real $((i+1)l) \times ((i+1)l)$ matrix whose columns are orthonormal,
and $\Sigma$ is a real diagonal $((i+1)l) \times ((i+1)l)$ matrix
whose entries are all nonnegative.
(See, for example, Chapter~8 in~\cite{golub-van_loan} for details
concerning the construction of such an SVD.)
\item[5.] Compute the $n \times ((i+1)l)$ product matrix
\begin{equation}
\label{product33}
V = Q \, W.
\end{equation}
\end{enumerate}

An analysis similar to the proof of Theorem~\ref{the_theorem} above
shows that the matrices $U$, $\Sigma$, and $V$ produced
by the algorithm of the present subsection satisfy
the same upper bounds~(\ref{the_point}) and~(\ref{explicit_eval})
as the matrices produced by the algorithm of Subsection~\ref{main_algorithm}.
If desired, one may produce a similarly accurate rank-$k$ approximation
by arranging $U$, $\Sigma$, and $V$ such that the diagonal entries
of $\Sigma$ appear in nonincreasing order,
and then discarding all but the leftmost $k$ columns of $U$
and all but the leftmost $k$ columns of $V$,
and retaining only the leftmost uppermost $k \times k$ block of $\Sigma$.
We will refer to the algorithm of the present subsection
as ``blanczos,'' due to its similarity with the block Lanczos method
(see, for example, Subsection~9.2.6 in~\cite{golub-van_loan}
for a description of the block Lanczos method).

\section{Numerical results}
\label{numerical}

In this section, we illustrate the performance of the algorithm
of the present paper via several numerical examples.

We use the algorithm to construct a rank-$k$ approximation,
with $k = 10$, to the $m \times (2m)$ matrix $A$ defined
via its singular value decomposition
\begin{equation}
\label{test_matrix}
A = U^{(A)} \, \Sigma^{(A)} \, (V^{(A)})^\T,
\end{equation}
where $U^{(A)}$ is an $m \times m$ Hadamard matrix
(a unitary matrix whose entries are all $\pm 1/\sqrt{m}$),
$V^{(A)}$ is a $(2m) \times (2m)$ Hadamard matrix,
and $\Sigma^{(A)}$ is an $m \times (2m)$ matrix
whose entries are zero off the main diagonal,
and whose diagonal entries are defined
in terms of the $(k+1)^\st$ singular value $\sigma_{k+1}$ via the formulae
\begin{equation}
\Sigma^{(A)}_{j,j} = \sigma_j = (\sigma_{k+1})^{\lfloor j/2 \rfloor/5}
\end{equation}
for $j = 1$,~$2$, \dots, $9$,~$10$,
where $\lfloor j/2 \rfloor$ is the greatest integer less than
or equal to $j/2$, and
\begin{equation}
\Sigma^{(A)}_{j,j} = \sigma_j = \sigma_{k+1} \cdot \frac{m-j}{m-11}
\end{equation}
for $j = 11$,~$12$, \dots, $m-1$,~$m$.
Thus, $\sigma_1 = 1$ and $\sigma_k = \sigma_{k+1}$ (recall that $k = 10$).
We always choose $\sigma_{k+1} < 1$,
so that $\sigma_1 \ge \sigma_2 \ge \dots \ge \sigma_{m-1} \ge \sigma_m$.

Figure~1 plots the singular values
$\sigma_1$,~$\sigma_2$, \dots, $\sigma_{m-1}$,~$\sigma_m$
of $A$ with $m = 512$ and $\sigma_{k+1} = .001$;
these parameters correspond to the first row of numbers in Table~1,
the first row of numbers in Table~2, and the first row of numbers in Table~6.

Table~1 reports the results of applying the five-step algorithm
of Subsection~\ref{main_algorithm} to matrices of various sizes, with $i = 1$.
Table~2 reports the results of applying the five-step algorithm
of Subsection~\ref{main_algorithm} to matrices of various sizes, with $i = 0$.
The algorithms of~\cite{sarlos3}, \cite{sarlos4},
and~\cite{liberty-woolfe-martinsson-rokhlin-tygert}
for low-rank approximation are essentially the same as the algorithm used
for Table~2 (with $i=0$).

Table~3 reports the results of applying the five-step algorithms
of Subsections~\ref{main_algorithm} and~\ref{modified}
with varying numbers of iterations $i$.
Rows in the table where $i$ is enclosed in parentheses correspond
to the algorithm of Subsection~\ref{modified};
rows where $i$ is not enclosed in parentheses correspond
to the algorithm of Subsection~\ref{main_algorithm}.

Table~4 reports the results of applying the five-step algorithm
of Subsection~\ref{main_algorithm} to matrices
whose best rank-$k$ approximations have varying accuracies.
Table~5 reports the results of applying the blanczos algorithm
of Subsection~\ref{blanczos} to matrices
whose best rank-$k$ approximations have varying accuracies.

Table~6 reports the results of calculating pivoted $QR$-decompositions,
via plane (Householder) reflections, of matrices of various sizes.
We computed the pivoted $QR$-decomposition of the transpose of $A$ defined
in~(\ref{test_matrix}), rather than of $A$ itself, for reasons of accuracy
and efficiency. As pivoted $QR$-decomposition requires dense matrix arithmetic,
our 1~GB of random-access memory (RAM) imposed the limit $m \le 4096$
for Table~6.

The headings of the tables have the following meanings:
\begin{itemize}
\item $m$ is the number of rows in $A$, the matrix being approximated.
\item $n$ is the number of columns in $A$, the matrix being approximated.
\item $i$ is the integer parameter used in the algorithms
      of Subsections~\ref{main_algorithm}, \ref{modified}, and~\ref{blanczos}.
      Rows in the tables where $i$ is enclosed in parentheses correspond
      to the algorithm of Subsection~\ref{modified};
      rows where $i$ is not enclosed in parentheses correspond
      to either the algorithm of Subsection~\ref{main_algorithm} or
      that of Subsection~\ref{blanczos}.
\item $t$ is the time in seconds required by the algorithm to create
      an approximation and compute its accuracy $\delta$.
\item $\sigma_{k+1}$ is the $(k+1)^\st$ greatest singular value of $A$,
      the matrix being approximated; $\sigma_{k+1}$ is also the accuracy
      of the best possible rank-$k$ approximation to $A$.
\item $\delta$ is the accuracy of the approximation $U \, \Sigma \, V^\T$
      (or $(QRP)^\T$, for Table~6) constructed by the algorithm.
      For Tables~1--5,
\begin{equation}
\delta = \| A - U \, \Sigma \, V^\T \|,
\end{equation}
where $U$ is an $m \times k$ matrix whose columns are orthonormal,
$V$ is an $n \times k$ matrix whose columns are orthonormal,
and $\Sigma$ is a diagonal $k \times k$ matrix whose entries
are all nonnegative; for Table~6,
\begin{equation}
\delta = \| A - (QRP)^\T \|,
\end{equation}
where $P$ is an $m \times m$ permutation matrix,
$R$ is a $k \times m$ upper-triangular (meaning upper-trapezoidal) matrix,
and $Q$ is an $n \times k$ matrix whose columns are orthonormal.
\end{itemize}

The values for $t$ are the average values over 3 independent randomized trials
of the algorithm. The values for $\delta$ are the worst (maximum) values
encountered in 3 independent randomized trials of the algorithm.
The values for $\delta$ in each trial are those produced by 20 iterations
of the power method applied to $A - U \, \Sigma \, V^\T$
(or $A - (QRP)^\T$, for Table~6),
started with a vector whose entries
are i.i.d.\ centered Gaussian random variables.
The theorems of~\cite{dixon} and~\cite{kuczynski-wozniakowski}
guarantee that this power method produces accurate results
with overwhelmingly high probability.

We performed all computations using IEEE standard double-precision variables,
whose mantissas have approximately one bit of precision less than 16 digits
(so that the relative precision of the variables is approximately .2E--15).
We ran all computations on one core
of a 1.86~GHz Intel Centrino Core Duo microprocessor
with 2~MB of L2 cache and 1~GB of RAM.
We compiled the Fortran~77 code
using the Lahey/Fujitsu Linux Express v6.2 compiler,
with the optimization flag {\tt {-}{-}o2} enabled.
We implemented a fast Walsh-Hadamard transform
to apply rapidly the Hadamard matrices $U^{(A)}$ and $V^{(A)}$
in~(\ref{test_matrix}).
We used plane (Householder) reflections
to compute all pivoted $QR$-decompositions.
We used the LAPACK 3.1.1 divide-and-conquer SVD routine {\tt dgesdd}
to compute all full SVDs.
For the parameter $l$, we set $l = 12$ $(= k+2)$
for all of the examples reported here.

The experiments reported here and our further tests point
to the following:

\begin{enumerate}
\item The accuracies in Table~1 are superior to those in Table~2;
the algorithm performs much better with $i>0$.
(The algorithms of~\cite{liberty-woolfe-martinsson-rokhlin-tygert},
\cite{sarlos3}, and~\cite{sarlos4}
for low-rank approximation are essentially the same as the algorithm used
for Tables~1 and~2 when $i=0$.)
\item The accuracies in Table~1 are superior to the corresponding accuracies
in Table~6; the algorithm of the present paper produces higher accuracy
than the classical pivoted $QR$-decompositions for matrices whose spectra
decay slowly (such as those matrices tested in the present section).
\item The accuracies in Tables~1--3 appear to be proportional
to $m^{1/(4i+2)} \, \sigma_{k+1}$ for the algorithm
of Subsection~\ref{main_algorithm},
and to be proportional to $m^{1/(4i)} \, \sigma_{k+1}$ for the algorithm
of Subsection~\ref{modified},
in accordance with~(\ref{sort_of_svd}) and~(\ref{sort_of_svdmod}).
The numerical results reported here, as well as our further experiments,
indicate that the theoretical bound~(\ref{the_point}) on the accuracy
should remain valid with a greatly reduced constant in the right-hand side,
independent of the matrix $A$ being approximated.
See item~6 below for a discussion of Tables~4 and~5.
\item The timings in Tables~1--5 are consistent with~(\ref{svd_costs}),
as we could (and did) apply the Hadamard matrices $U^{(A)}$ and $V^{(A)}$
in~(\ref{test_matrix}) to vectors via fast Walsh-Hadamard transforms
at a cost of $\bigoh(m \, \log(m))$ floating-point operations
per matrix-vector multiplication.
\item The quality of the pseudorandom number generator has almost no effect
on the accuracy of the algorithm, nor does substituting uniform variates
for the normal variates.
\item The accuracies in Table~5 are superior to those in Table~4,
particularly when the $k^\th$ greatest singular value $\sigma_k$
of the matrix $A$ being approximated is very small. Understandably,
the algorithm of Subsection~\ref{main_algorithm} would seem to break down
when $(\sigma_k)^{2i+1}$ is less than the machine precision,
while $\sigma_k$ itself is not,
unlike the blanczos algorithm of Subsection~\ref{blanczos}.
When $(\sigma_k)^{2i+1}$ is much less than the machine precision,
while $\sigma_k$ is not,
the accuracy of blanczos in the presence of roundoff is similar to that
of the algorithm of Subsection~\ref{main_algorithm} run with a reduced $i$.
When $(\sigma_k)^{2i+1}$ is much greater than the machine precision,
the accuracy of blanczos is similar to that of the algorithm
of Subsection~\ref{main_algorithm} run with $i$ being the same as
in the blanczos algorithm.
Since the blanczos algorithm of Subsection~\ref{blanczos}
is so tolerant of roundoff,
we suspect that the blanczos algorithm is
a better general-purpose black-box tool
for the computation of principal component analyses,
despite its somewhat higher cost as compared with the algorithms
of Subsections~\ref{main_algorithm} and~\ref{modified}.
\end{enumerate}

\begin{remark}
A MATLAB\registered\ implementation of the blanczos algorithm
of Subsection~\ref{blanczos} is available on the file exchange at
{\tt http://www.mathworks.com} in the package entitled,
``Principal Component Analysis.''
\end{remark}

\section{Appendix}
\label{appendix}

In this appendix, we restate and prove Lemmas~\ref{all_together2}
and~\ref{probability_bounds2} from Section~\ref{apparatus}.

The following lemma, stated earlier as Lemma~\ref{all_together2}
in Section~\ref{apparatus},
shows that the product $A \, Q \, Q^\T$
of matrices $A$, $Q$, and $Q^\T$
is a good approximation to a matrix $A$,
provided that there exist matrices $G$ and $S$ such that
\begin{enumerate}
\item[1.] the columns of $Q$ are orthonormal,
\item[2.] $Q \, S$ is a good approximation to $(G \, (A \, A^\T)^i \, A)^\T$,
and
\item[3.] there exists a matrix $F$ such that $\| F \|$ is not too large,
and $F \, G \, (A \, A^\T)^i \, A$ is a good approximation to $A$.
\end{enumerate}

\begin{lemma}
\label{all_together22}
Suppose that $i$, $k$, $l$, $m$, and~$n$ are positive integers
with $k \le l \le m \le n$.
Suppose further that $A$ is a real $m \times n$ matrix,
$Q$ is a real $n \times k$ matrix whose columns are orthonormal,
$S$ is a real $k \times l$ matrix,
$F$ is a real $m \times l$ matrix,
and $G$ is a real $l \times m$ matrix.

Then,
\begin{equation}
\label{reconstruction22}
\| A \, Q \, Q^\T - A \|
\le 2 \, \| F \, G \, (A \, A^\T)^i \, A - A \|
  + 2 \, \| F \| \, \| Q \, S - (G \, (A \, A^\T)^i \, A)^\T \|.
\end{equation}
\end{lemma}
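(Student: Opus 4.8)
The plan is to bound $\| A \, Q \, Q^\T - A \|$ by splitting the error into a piece that lives in the column space of $Q$ and a piece that is orthogonal to it, and then to exploit that $QS$ nearly equals $(G\,(AA^\T)^i A)^\T$ together with the existence of the small-norm left-inverse-type matrix $F$. Concretely, write $B = G\,(AA^\T)^i A$, so that $B^\T$ is $n \times l$ and $QS$ approximates it. Since $Q$ has orthonormal columns, $Q\,Q^\T$ is the orthogonal projector onto the column space of $Q$, and $\| A\,Q\,Q^\T - A\| = \|A\,(\Id - Q\,Q^\T)\|$. The first move is to observe that for any matrix $M$ whose transpose has column space contained in $\mathrm{ran}(Q)$ we have $M\,(\Id - Q\,Q^\T) = 0$; in particular $(QS)^\T (\Id - Q\,Q^\T) = 0$, i.e.\ $S^\T Q^\T (\Id - Q\,Q^\T) = 0$. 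This is not quite $B\,(\Id-Q\,Q^\T)=0$, but $\|B\,(\Id - Q\,Q^\T)\| = \|(B - S^\T Q^\T)(\Id - Q\,Q^\T)\| \le \|B - S^\T Q^\T\| = \|QS - B^\T\|$, using $\|\Id - Q\,Q^\T\| \le 1$.

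Next I would bring in $F$. Write $A = F\,B + (A - F\,B)$, so that
\begin{equation*}
A\,(\Id - Q\,Q^\T) = F\,B\,(\Id - Q\,Q^\T) + (A - F\,B)\,(\Id - Q\,Q^\T).
\end{equation*}
The second term is bounded in norm by $\|A - F\,B\| = \|F\,G\,(AA^\T)^i A - A\|$, again since $\|\Id - Q\,Q^\T\|\le 1$. For the first term, submultiplicativity gives $\|F\,B\,(\Id - Q\,Q^\T)\| \le \|F\| \, \|B\,(\Id - Q\,Q^\T)\| \le \|F\| \, \|QS - B^\T\|$ by the inequality from the previous paragraph. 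Adding the two contributions yields
\begin{equation*}
\| A\,Q\,Q^\T - A\| \le \|F\,G\,(AA^\T)^i A - A\| + \|F\|\,\|QS - (G\,(AA^\T)^i A)^\T\|,
\end{equation*}
which is even slightly stronger than \eqref{reconstruction22} (the factors of $2$ in the statement leave room to spare, so any cruder splitting with a triangle inequality also suffices).

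The only real subtlety is the bookkeeping around transposes and column spaces: one must be careful that $QS$ approximating $B^\T$ translates correctly into $S^\T Q^\T$ approximating $B$, and that the projector identity $(\text{anything})\cdot Q \cdot Q^\T (\Id - Q\,Q^\T) = 0$ is applied on the correct side. I expect this to be the main (and essentially the only) obstacle; once the sides are lined up, everything is a one-line application of submultiplicativity and $\|\Id - Q\,Q^\T\| \le 1$. I would therefore carry out the steps in this order: (i) rewrite the target as $\|A(\Id - Q\,Q^\T)\|$; (ii) record $\|\Id - Q\,Q^\T\|\le 1$ from orthonormality of the columns of $Q$; (iii) show $\|B(\Id - Q\,Q^\T)\| \le \|QS - B^\T\|$ via the vanishing of $S^\T Q^\T(\Id - Q\,Q^\T)$; (iv) split $A = FB + (A - FB)$ and apply the triangle inequality together with submultiplicativity; (v) collect terms to obtain \eqref{reconstruction22}.
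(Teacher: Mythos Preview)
Your argument is correct and in fact yields the stronger inequality without the factors of~$2$. The paper uses the same ingredients---submultiplicativity, $\|Q\|=\|Q^\T\|\le 1$, and the identity $Q^\T Q=\Id$ (whence $S^\T Q^\T Q Q^\T=S^\T Q^\T$)---but organizes them differently: it inserts the intermediate points $F G B\, Q Q^\T$ and $F G B$ (with the paper's shorthand $B=(AA^\T)^i A$) to obtain a three-term triangle inequality, and then splits the middle term $\|FGB\,QQ^\T-FGB\|$ again through $S^\T Q^\T Q Q^\T$. That route visits $\|A-FGB\|$ twice and $\|QS-(GB)^\T\|$ twice, which is where the factors of~$2$ come from. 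Your projector formulation $A(\Id-QQ^\T)=(A-FGB)(\Id-QQ^\T)+FGB(\Id-QQ^\T)$, combined with the single subtraction $GB(\Id-QQ^\T)=(GB-S^\T Q^\T)(\Id-QQ^\T)$, touches each term only once and so gives the sharper constant. Both proofs are elementary; yours is simply a tighter rearrangement of the same pieces.
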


\begin{proof}
The proof is straightforward, but tedious, as follows.

To simplify notation, we define
\begin{equation}
\label{shorter}
B = (A \, A^\T)^i \, A.
\end{equation}

We obtain from the triangle inequality that
\begin{multline}
\label{triangle}
\| A \, Q \, Q^\T - A \|
\le \| A \, Q \, Q^\T - F \, G \, B \, Q \, Q^\T \|
  + \| F \, G \, B \, Q \, Q^\T - F \, G \, B \| \\
  + \| F \, G \, B - A \|.
\end{multline}

First, we provide a bound
for $\| A \, Q \, Q^\T - F \, G \, B \, Q \, Q^\T \|$.
Clearly,
\begin{equation}
\label{bound0}
\| A \, Q \, Q^\T - F \, G \, B \, Q \, Q^\T \|
\le \| A - F \, G \, B \| \, \| Q \| \, \| Q^\T \|.
\end{equation}
It follows from the fact that the columns of $Q$ are orthonormal that
\begin{equation}
\label{bound1}
\| Q \| \le 1
\end{equation}
and
\begin{equation}
\label{bound2}
\| Q^\T \| \le 1.
\end{equation}
Combining~(\ref{bound0}), (\ref{bound1}), and~(\ref{bound2}) yields
\begin{equation}
\label{simpler}
\| A \, Q \, Q^\T - F \, G \, B \, Q \, Q^\T \| \le \| A - F \, G \, B \|.
\end{equation}

Next, we provide a bound
for $\| F \, G \, B \, Q \, Q^\T - F \, G \, B \|$.
Clearly,
\begin{equation}
\label{triangle4}
\| F \, G \, B \, Q \, Q^\T - F \, G \, B \|
\le \| F \| \, \| G \, B \, Q \, Q^\T - G \, B \|.
\end{equation}
It follows from the triangle inequality that
\begin{multline}
\label{triangle3}
\| G \, B \, Q \, Q^\T - G \, B \|
\le \| G \, B \, Q \, Q^\T - S^\T \, Q^\T \, Q \, Q^\T \| \\
  + \| S^\T \, Q^\T \, Q \, Q^\T - S^\T \, Q^\T \|
  + \| S^\T \, Q^\T - G \, B \|.
\end{multline}

Furthermore,
\begin{equation}
\label{prev}
\| G \, B \, Q \, Q^\T - S^\T \, Q^\T \, Q \, Q^\T \|
\le \| G \, B - S^\T \, Q^\T \| \, \| Q \| \, \| Q^\T \|.
\end{equation}
Combining~(\ref{prev}), (\ref{bound1}), and~(\ref{bound2}) yields
\begin{equation}
\label{bound4}
\| G \, B \, Q \, Q^\T - S^\T \, Q^\T \, Q \, Q^\T \|
\le \| G \, B - S^\T \, Q^\T \|.
\end{equation}

Also, it follows from the fact that the columns of $Q$ are orthonormal that
\begin{equation}
\label{orthonormal}
Q^\T \, Q = \Id.
\end{equation}
It follows from~(\ref{orthonormal}) that
\begin{equation}
\label{vanish}
\| S^\T \, Q^\T \, Q \, Q^\T - S^\T \, Q^\T \| = 0.
\end{equation}

Combining~(\ref{triangle3}), (\ref{bound4}), and~(\ref{vanish}) yields
\begin{equation}
\label{triangle5}
\| G \, B \, Q \, Q^\T - G \, B \| \le 2 \, \| S^\T \, Q^\T - G \, B \|.
\end{equation}
Combining~(\ref{triangle4}) and~(\ref{triangle5}) yields
\begin{equation}
\label{triangle6}
\| F \, G \, B \, Q \, Q^\T - F \, G \, B \|
\le 2 \, \| F \| \, \| S^\T \, Q^\T - G \, B \|.
\end{equation}

Combining~(\ref{triangle}), (\ref{simpler}), (\ref{triangle6}),
and~(\ref{shorter}) yields~(\ref{reconstruction22}).
\end{proof}

The following lemma, stated earlier as Lemma~\ref{probability_bounds2}
in Section~\ref{apparatus}, shows that,
for any positive integer $i$, matrix $A$, and matrix $G$ whose entries are
i.i.d.\ Gaussian random variables of zero mean and unit variance,
with very high probability there exists a matrix $F$
with a reasonably small norm,
such that $F \, G \, (A \, A^\T)^i \, A$ is a good approximation to $A$.
This lemma is similar to Lemma~19 of~\cite{martinsson-rokhlin-tygert3}.

\begin{lemma}
\label{probability_bounds22}
Suppose that $i$, $j$, $k$, $l$, $m$, and~$n$ are positive integers
with $j < k < l < m \le n$.
Suppose further that $A$ is a real $m \times n$ matrix,
$G$ is a real $l \times m$ matrix whose entries are
i.i.d.\ Gaussian random variables of zero mean and unit variance,
and $\beta$ and $\gamma$ are positive real numbers, such that
the $j^\ith$ greatest singular value $\sigma_j$ of $A$ is positive,
$\gamma > 1$, and
\begin{multline}
\label{probability22}
\Phi
  = 1 - \frac{1}{\sqrt{2 \pi \, (l-j+1)}}
 \, \left( \frac{e}{(l-j+1) \, \beta} \right)^{l-j+1} \\
  - \frac{1}{4 \, (\gamma^2-1) \, \sqrt{\pi \, \max(m-k,l) \; \gamma^2}}
    \left( \frac{2 \gamma^2}{e^{\gamma^2-1}} \right)^{\max(m-k,\,l)} \\
  - \frac{1}{4 \, (\gamma^2-1) \, \sqrt{\pi \, l \, \gamma^2}}
    \left( \frac{2 \gamma^2}{e^{\gamma^2-1}} \right)^l
\end{multline}
is nonnegative.

Then, there exists a real $m \times l$ matrix $F$ such that
\begin{multline}
\label{approximation22}
\| F \, G \, (A \, A^\T)^i \, A - A \|
\le \sqrt{ 2 l^2 \, \beta^2 \, \gamma^2 + 1 }
 \;\; \sigma_{j+1} \\
  + \sqrt{ 2 l \, \max(m-k,l) \, \beta^2 \, \gamma^2
        \, \left( \frac{\sigma_{k+1}}{\sigma_j} \right)^{4i} + 1 }
 \;\; \sigma_{k+1}
\end{multline}
and
\begin{equation}
\label{small_norm22}
\| F \| \le \frac{\sqrt{l} \; \beta}{(\sigma_j)^{2i}}
\end{equation}
with probability not less than $\Phi$ defined in~(\ref{probability22}),
where $\sigma_j$ is the $j^\ith$ greatest singular value of $A$,
$\sigma_{j+1}$ is the $(j+1)^\ist$ greatest singular value of $A$,
and $\sigma_{k+1}$ is the $(k+1)^\ist$ greatest singular value of $A$.
\end{lemma}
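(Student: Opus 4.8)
The plan is to follow the scheme of Lemma~19 of~\cite{martinsson-rokhlin-tygert3}: construct $F$ explicitly as a scaled left inverse of the ``action of $G$'' on the leading singular subspace of $A$, and then control every quantity in sight by partitioning the spectrum of $A$ at the indices $j$, $k$, and $m$. First I would form an SVD $A = U\,\Sigma\,V^\T$ and note that $(A\,A^\T)^i\,A = U\,(\Sigma\,\Sigma^\T)^i\,\Sigma\,V^\T$, so that $G\,(A\,A^\T)^i\,A = H\,D\,V^\T$, where $H = G\,U$ is again a real $l\times m$ matrix of i.i.d.\ centered unit-variance Gaussian variables (an orthogonal change of coordinates preserves this law), and $D = (\Sigma\,\Sigma^\T)^i\,\Sigma$ has diagonal entries $(\sigma_1)^{2i+1},\dots,(\sigma_m)^{2i+1}$ and zero trailing columns. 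Since $V$ is orthogonal and the trailing zero columns affect no spectral norm, it suffices to exhibit an $F$ making $\|F\,H\,D - U\,\Sigma\|$ small, where henceforth $D$ and $\Sigma$ denote the $m\times m$ leading diagonal blocks. I would then partition $U = (\,U_1 \mid U_2 \mid U_3\,)$ into column blocks of widths $j$, $k-j$, $m-k$, and partition $H = (\,H_1 \mid H_2 \mid H_3\,)$ and $\Sigma = \mathrm{diag}(\Sigma_1,\Sigma_2,\Sigma_3)$ conformally, so that $\|\Sigma_1^{-1}\| = 1/\sigma_j$, $\|\Sigma_2\| = \sigma_{j+1}$, and $\|\Sigma_3\| = \sigma_{k+1}$.

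On the event that $H_1$ --- a real $l\times j$ Gaussian matrix with $j < l$ --- has rank $j$, I would set $F = U_1\,\Sigma_1^{-2i}\,(H_1^\T H_1)^{-1}\,H_1^\T$. Then $F\,H_1 = U_1\,\Sigma_1^{-2i}$, so the first $j$ columns of $F\,H\,D - U\,\Sigma$ vanish, and the remaining columns split as the $m\times(k-j)$ block $E_2 := (F\,H_2\,\Sigma_2^{2i} - U_2)\,\Sigma_2$ and the $m\times(m-k)$ block $E_3 := (F\,H_3\,\Sigma_3^{2i} - U_3)\,\Sigma_3$; padding with zeros, $\|F\,H\,D - U\,\Sigma\| \le \|E_2\| + \|E_3\|$. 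For $\|F\|$ itself, (\ref{pseudoinverse_norm}) gives $\|(H_1^\T H_1)^{-1}\,H_1^\T\| = 1/s$, where $s$ is the least singular value of $H_1$; Lemma~\ref{least_value} yields $s \ge 1/(\sqrt{l}\,\beta)$ off an event whose probability is the $(l-j+1)$-summand of $1-\Phi$, whence $\|F\| \le \|\Sigma_1^{-2i}\|/s \le \sqrt{l}\,\beta/(\sigma_j)^{2i}$, which is (\ref{small_norm22}).

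The structural observation that makes the bound sharp is that the columns of $F\,H_2$ and of $F\,H_3$ lie in the column space of $U_1$, which is orthogonal to the column spaces of $U_2$ and of $U_3$ (all three are column blocks of the orthogonal matrix $U$); concretely $F^\T U_2 = 0$ and $F^\T U_3 = 0$. Hence, for $r \in \{2,3\}$, the matrices $F\,H_r\,\Sigma_r^{2i}$ and $U_r$ have orthogonal ranges, so $\|F\,H_r\,\Sigma_r^{2i} - U_r\|^2 \le \|F\,H_r\,\Sigma_r^{2i}\|^2 + \|U_r\|^2 = \|F\,H_r\,\Sigma_r^{2i}\|^2 + 1$ --- this Pythagorean step, in place of the triangle inequality, is what produces the ``$+1$'' inside the square roots of (\ref{approximation22}). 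Bounding $\|F\,H_2\|$ and $\|F\,H_3\|$ by $\|F\|$ times the estimates of Lemma~\ref{greatest_value} applied to $H_2$ (an $l\times(k-j)$ Gaussian, greatest singular value $\le \sqrt{2\,l}\;\gamma$ since $k-j < l$) and to $H_3$ (an $l\times(m-k)$ Gaussian, greatest singular value $\le \sqrt{2\,\max(m-k,l)}\;\gamma$), then using $\|E_r\| \le \|F\,H_r\,\Sigma_r^{2i} - U_r\|\cdot\|\Sigma_r\|$ and $\sigma_{j+1} \le \sigma_j$, gives \[ \|E_2\| \le \sqrt{\,2\,l^2\,\beta^2\,\gamma^2\,(\sigma_{j+1}/\sigma_j)^{4i} + 1\,}\;\sigma_{j+1} \le \sqrt{\,2\,l^2\,\beta^2\,\gamma^2 + 1\,}\;\sigma_{j+1} \] and \[ \|E_3\| \le \sqrt{\,2\,l\,\max(m-k,l)\,\beta^2\,\gamma^2\,(\sigma_{k+1}/\sigma_j)^{4i} + 1\,}\;\sigma_{k+1}, \] which are exactly the two summands of (\ref{approximation22}). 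A union bound over the three exceptional events --- the failure of the lower bound on $s$ (Lemma~\ref{least_value}, the $(l-j+1)$-summand of $1-\Phi$, which in particular forces $H_1$ to have full rank), $\|H_3\|$ exceeding $\sqrt{2\,\max(m-k,l)}\;\gamma$ (Lemma~\ref{greatest_value}, the $\max(m-k,l)$-summand), and $\|H_2\|$ exceeding $\sqrt{2\,l}\;\gamma$ (Lemma~\ref{greatest_value} with parameter $l$, the remaining summand) --- then shows that all of the above holds simultaneously with probability at least $\Phi$.

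The step I expect to be the main obstacle is the dimension bookkeeping needed to make the three random-matrix estimates land on \emph{precisely} the three terms subtracted in $\Phi$: one must split the non-leading part of $H$ into a $(k-j)$-wide block (handled so that $\max(k-j,l) = l$) and an $(m-k)$-wide block (which yields $\max(m-k,l)$, rather than the larger $\max(m-j,l)$ that lumping the two blocks together would give), and one must notice that only the Pythagorean inequality --- not the triangle inequality --- delivers the ``$+1$'' terms. The remaining manipulations are routine submultiplicativity and the triangle inequality.
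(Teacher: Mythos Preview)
Your proposal is correct and follows essentially the same route as the paper: the same explicit choice $F = U_1\,\Sigma_1^{-2i}\,(H_1^\T H_1)^{-1}H_1^\T$, the same three-way split of $GU$ into blocks of widths $j$, $k-j$, $m-k$, and the same three random-matrix estimates (Lemmas~\ref{least_value} and~\ref{greatest_value}) yielding exactly the three terms subtracted in $\Phi$. The only cosmetic difference is that the paper writes $F\,G\,(A A^\T)^i A - A$ as $U$ times an explicit $3\times 3$ block matrix and bounds its squared norm by the sum of the squared block norms (then applies $\sqrt{x+y}\le\sqrt{x}+\sqrt{y}$), whereas you split column-wise into $E_2$ and $E_3$ and phrase the ``$+1$'' step as orthogonality of the ranges of $F H_r \Sigma_r^{2i}$ and $U_r$; these are the same inequality in different clothing.
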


\begin{proof}
We prove the existence of a matrix $F$ satisfying~(\ref{approximation22})
and~(\ref{small_norm22}) by constructing one.

We start by forming an SVD of $A$,
\begin{equation}
\label{svd2}
A = U \, \Sigma \, V^\T,
\end{equation}
where $U$ is a real unitary $m \times m$ matrix,
$\Sigma$ is a real diagonal $m \times m$ matrix,
and $V$ is a real $n \times m$ matrix whose columns are orthonormal, such that
\begin{equation}
\label{ordering2}
\Sigma_{p,p} = \sigma_p
\end{equation}
for $p = 1$,~$2$, \dots, $m-1$,~$m$,
where $\Sigma_{p,p}$ is the entry in row $p$ and column $p$ of $\Sigma$,
and $\sigma_p$ is the $p^\th$ greatest singular value of $A$.

Next, we define auxiliary matrices
$H$, $R$, $\Gamma$, $S$, $T$, $\Theta$, and $P$.
We define $H$ to be the leftmost $l \times j$ block
of the $l \times m$ matrix $G \, U$,
$R$ to be the $l \times (k-j)$ block of $G \, U$
whose first column is the $(k+1)^\st$ column of $G \, U$,
and $\Gamma$ to be the rightmost $l \times (m-k)$ block
of $G \, U$, so that
\begin{equation}
\label{partition2}
G \, U = \left( \begin{array}{c|c|c} H & R & \Gamma \end{array} \right).
\end{equation}
Combining the fact that $U$ is real and unitary,
and the fact that the entries of $G$ are i.i.d.\ Gaussian
random variables of zero mean and unit variance,
we see that the entries of $H$ are also i.i.d.\ Gaussian
random variables of zero mean and unit variance,
as are the entries of $R$, and as are the entries of $\Gamma$.
We define $H^{(-1)}$ to be the real $j \times l$ matrix
given by the formula
\begin{equation}
\label{definition_of_pseudoinverse2}
H^{(-1)} = (H^\T \, H)^{-1} \, H^\T
\end{equation}
($H^\T \, H$ is invertible with high probability
due to Lemma~\ref{least_value}).
We define $S$ to be the leftmost uppermost $j \times j$ block of $\Sigma$,
$T$ to be the $(k-j) \times (k-j)$ block of $\Sigma$
whose leftmost uppermost entry is the entry
in the $(j+1)^\st$ row and $(j+1)^\st$ column of $\Sigma$,
and $\Theta$ to be the rightmost lowermost $(m-k) \times (m-k)$ block
of $\Sigma$, so that
\begin{equation}
\label{svd_partition2}
\Sigma
= \left( \begin{array}{c|c|c} S   & \0s & \0s    \\\hline
                              \0s & T   & \0s    \\\hline
                              \0s & \0s & \Theta
         \end{array} \right).
\end{equation}
We define $P$ to be the real $m \times l$ matrix
whose uppermost $j \times l$ block is the product $S^{-2i} \, H^{(-1)}$,
whose entries are zero in the $(k-j) \times l$ block whose first row
is the $(j+1)^\st$ row of $P$,
and whose entries in the lowermost $(m-k) \times l$ block are zero,
so that
\begin{equation}
\label{pad2}
P = \left( \begin{array}{c} S^{-2i} \, H^{(-1)} \\\hline \0s
                                                \\\hline \0s
           \end{array} \right).
\end{equation}

Finally, we define $F$ to be the $m \times l$ matrix given by
\begin{equation}
\label{inverter2}
F = U \, P = U \, \left( \begin{array}{c} S^{-2i} \, H^{(-1)} \\\hline
                                          \0s \\\hline \0s
                         \end{array} \right).
\end{equation}

Combining~(\ref{definition_of_pseudoinverse2}), (\ref{pseudoinverse_norm}), 
the fact that the entries of $H$ are i.i.d.\ Gaussian
random variables of zero mean and unit variance,
and Lemma~\ref{least_value} yields
\begin{equation}
\label{pseudoinverse2}
\left\| H^{(-1)} \right\| \le \sqrt{l} \; \beta
\end{equation}
with probability not less than
\begin{equation}
1 - \frac{1}{\sqrt{2 \pi \, (l-j+1)}}
 \, \left( \frac{e}{(l-j+1) \, \beta} \right)^{l-j+1}.
\end{equation}
Combining~(\ref{inverter2}), (\ref{pseudoinverse2}), (\ref{svd_partition2}),
(\ref{ordering2}), the fact that $\Sigma$ is zero off its main diagonal,
and the fact that $U$ is unitary yields~(\ref{small_norm22}).

We now show that $F$ defined in~(\ref{inverter2})
satisfies~(\ref{approximation22}).

Combining~(\ref{svd2}), (\ref{partition2}), and~(\ref{inverter2}) yields
\begin{equation}
\label{simplification12}
F \, G \, (A \, A^\T)^i \, A - A
= U \, \left( \left( \begin{array}{c} S^{-2i} \, H^{(-1)} \\\hline
                                      \0s \\\hline \0s
                     \end{array} \right)
              \left( \begin{array}{c|c|c} H & R & \Gamma \end{array} \right)
              \, \Sigma^{2i}
            - \Id \right) \, \Sigma \, V^\T.
\end{equation}
Combining~(\ref{definition_of_pseudoinverse2})
and~(\ref{svd_partition2}) yields
\begin{multline}
\label{simplification22}
\left( \left( \begin{array}{c} S^{-2i} \, H^{(-1)} \\\hline \0s
                                                   \\\hline \0s
              \end{array} \right)
       \left( \begin{array}{c|c|c} H & R & \Gamma \end{array} \right)
       \, \Sigma^{2i}
     - \Id \right) \, \Sigma \\
= \left( \begin{array}{c|c|c}
         \0s & S^{-2i} \, H^{(-1)} \, R \; T^{2i+1} &
               S^{-2i} \, H^{(-1)} \, \Gamma \, \Theta^{2i+1} \\\hline
         \0s & -T & \0s \\\hline
         \0s & \0s & -\Theta
  \end{array} \right).
\end{multline}
Furthermore,
\begin{multline}
\label{Frobenius2}
\left\| \left( \begin{array}{c|c|c}
       \0s & S^{-2i} \, H^{(-1)} \, R \; T^{2i+1} &
             S^{-2i} \, H^{(-1)} \, \Gamma \, \Theta^{2i+1} \\\hline
       \0s & -T & \0s \\\hline
       \0s & \0s & -\Theta
\end{array} \right) \right\|^2 \\
\le \left\| S^{-2i} \, H^{(-1)} \, R \, T^{2i+1} \right\|^2
  + \left\| S^{-2i} \, H^{(-1)} \, \Gamma \, \Theta^{2i+1} \right\|^2
  + \| T \|^2 + \| \Theta \|^2.
\end{multline}

Moreover,
\begin{equation}
\label{product_of_norms2}
\left\| S^{-2i} \, H^{(-1)} \, R \, T^{2i+1} \right\|
\le \left\| S^{-1} \right\|^{2i} \, \left\| H^{(-1)} \right\|
 \, \| R \| \, \| T \|^{2i+1}
\end{equation}
and
\begin{equation}
\label{product_of_norms3}
\left\| S^{-2i} \, H^{(-1)} \, \Gamma \, \Theta^{2i+1} \right\|
\le \left\| S^{-1} \right\|^{2i} \, \left\| H^{(-1)} \right\|
 \, \| \Gamma \| \, \| \Theta \|^{2i+1}.
\end{equation}
Combining~(\ref{svd_partition2}) and~(\ref{ordering2}) yields
\begin{equation}
\label{singular_value_bound1}
\left\| S^{-1} \right\| \le \frac{1}{\sigma_j},
\end{equation}
\begin{equation}
\label{singular_value_bound2}
\| T \| \le \sigma_{j+1},
\end{equation}
and
\begin{equation}
\label{singular_value_bound3}
\| \Theta \| \le \sigma_{k+1}.
\end{equation}
Combining~(\ref{simplification12})--(\ref{singular_value_bound3})
and the fact that the columns of $U$ are orthonormal,
as are the columns of $V$, yields
\begin{multline}
\label{almost_there2}
\| F \, G \, (A \, A^\T)^i \, A - A \|^2
\le \left( \left\| H^{(-1)} \right\|^2 \, \| R \|^2
        \, \left( \frac{\sigma_{j+1}}{\sigma_j} \right)^{4i} + 1 \right)
 \, (\sigma_{j+1})^2 \\
  + \left( \left\| H^{(-1)} \right\|^2 \, \| \Gamma \|^2
        \, \left( \frac{\sigma_{k+1}}{\sigma_j} \right)^{4i} + 1 \right)
 \, (\sigma_{k+1})^2.
\end{multline}

Combining Lemma~\ref{greatest_value}
and the fact that the entries of $R$ are
i.i.d.\ Gaussian random variables of zero mean and unit variance,
as are the entries of $\Gamma$, yields
\begin{equation}
\label{residual2}
\| R \| \le \sqrt{2l} \; \gamma
\end{equation}
and
\begin{equation}
\label{residual3}
\| \Gamma \| \le \sqrt{2 \, \max(m-k,l)} \; \gamma,
\end{equation}
with probability not less than
\begin{multline}
1 - \frac{1}{4 \, (\gamma^2-1) \, \sqrt{\pi \, \max(m-k,l) \, \gamma^2}}
    \left( \frac{2 \gamma^2}{e^{\gamma^2-1}} \right)^{\max(m-k,\,l)} \\
  - \frac{1}{4 \, (\gamma^2-1) \, \sqrt{\pi \, l \, \gamma^2}}
    \left( \frac{2 \gamma^2}{e^{\gamma^2-1}} \right)^l.
\end{multline}
Combining~(\ref{almost_there2}), (\ref{pseudoinverse2}),
(\ref{residual2}), and~(\ref{residual3}) yields
\begin{multline}
\label{pre-approximation2}
\| F \, G \, (A \, A^\T)^i \, A - A \|^2
\le \left( 2 l^2 \, \beta^2 \, \gamma^2
        \, \left( \frac{\sigma_{j+1}}{\sigma_j} \right)^{4i} + 1 \right)
 \, (\sigma_{j+1})^2 \\
  + \left( 2 l \, \max(m-k,l) \, \beta^2 \, \gamma^2
        \, \left( \frac{\sigma_{k+1}}{\sigma_j} \right)^{4i} + 1 \right)
 \, (\sigma_{k+1})^2
\end{multline}
with probability not less than $\Phi$ defined in~(\ref{probability22}).
Combining~(\ref{pre-approximation2}),
the fact that $\sigma_{j+1} \le \sigma_j$, and the fact that
\begin{equation}
\sqrt{x + y} \le \sqrt{x} + \sqrt{y}
\end{equation}
for any nonnegative real numbers $x$ and $y$
yields~(\ref{approximation22}). 
\end{proof}

\section*{Acknowledgements}
We thank Ming Gu for suggesting the combination
of the Lanczos method with randomized methods
for the low-rank approximation of matrices.
We are grateful for many helpful discussions
with R. Raphael Coifman and Yoel Shkolnisky.
We thank the anonymous referees for their useful suggestions.

\begin{figure}[b]
\begin{center}
\begin{tabular}{r|r|c|r|r|r}
   $m$ &     $n$ & $i$ &      $t$ & $\sigma_{k+1}$ & $\delta$ \\\hline
                                                                \hline
   512 &    1024 &   1 & .13E--01 &           .001 &    .0011 \\\hline
  2048 &    4096 &   1 & .56E--01 &           .001 &    .0013 \\\hline
  8192 &   16384 &   1 & .25E--00 &           .001 &    .0018 \\\hline
 32768 &   65536 &   1 &  .12E+01 &           .001 &    .0024 \\\hline
131072 &  262144 &   1 &  .75E+01 &           .001 &    .0037 \\\hline
524288 & 1048576 &   1 &  .36E+02 &           .001 &    .0039 \\\hline
\end{tabular}
\\\vspace{.125in}
Table~1: Five-step algorithm of Subsection~\ref{main_algorithm}
\end{center}
\end{figure}

\begin{figure}
\begin{center}
\begin{tabular}{r|r|c|r|r|r}
    $m$ &    $n$ & $i$ &      $t$ & $\sigma_{k+1}$ & $\delta$ \\\hline
                                                                \hline
   512 &    1024 &   0 & .14E--01 &           .001 &     .012 \\\hline
  2048 &    4096 &   0 & .47E--01 &           .001 &     .027 \\\hline
  8192 &   16384 &   0 & .22E--00 &           .001 &     .039 \\\hline
 32768 &   65536 &   0 &  .10E+01 &           .001 &     .053 \\\hline
131072 &  262144 &   0 &  .60E+01 &           .001 &     .110 \\\hline
524288 & 1048576 &   0 &  .29E+02 &           .001 &     .220 \\\hline
\end{tabular}
\\\vspace{.125in}
Table~2: Five-step algorithm of Subsection~\ref{main_algorithm}
\end{center}
\end{figure}

\begin{figure}
\begin{center}
\begin{tabular}{r|r|c|r|r|r}
   $m$ &     $n$ & $i$ &     $t$ & $\sigma_{k+1}$ & $\delta$ \\\hline
                                                               \hline
524288 & 1048576 &   0 & .29E+02 &            .01 &     .862 \\\hline
524288 & 1048576 & (1) & .31E+02 &            .01 &     .091 \\\hline
524288 & 1048576 &   1 & .36E+02 &            .01 &     .037 \\\hline
524288 & 1048576 & (2) & .38E+02 &            .01 &     .025 \\\hline
524288 & 1048576 &   2 & .43E+02 &            .01 &     .022 \\\hline
524288 & 1048576 & (3) & .45E+02 &            .01 &     .015 \\\hline
524288 & 1048576 &   3 & .49E+02 &            .01 &     .010 \\\hline
\end{tabular}
\\\vspace{.125in}
Table~3: Five-step algorithms of Subsections~\ref{main_algorithm}
         and~\ref{modified} \\\quad\quad\quad\quad
         (parentheses around $i$ designate Subsection~\ref{modified})
\end{center}
\end{figure}

\begin{figure}
\begin{center}
\begin{tabular}{r|r|c|r|r|r}
   $m$ &    $n$ & $i$ &     $t$ & $\sigma_{k+1}$ & $\delta$ \\\hline
                                                              \hline
262144 & 524288 &   1 & .17E+02 &       .10E--02 & .39E--02 \\\hline
262144 & 524288 &   1 & .17E+02 &       .10E--04 & .10E--03 \\\hline
262144 & 524288 &   1 & .17E+02 &       .10E--06 & .25E--05 \\\hline
262144 & 524288 &   1 & .17E+02 &       .10E--08 & .90E--06 \\\hline
262144 & 524288 &   1 & .17E+02 &       .10E--10 & .55E--07 \\\hline
262144 & 524288 &   1 & .17E+02 &       .10E--12 & .51E--08 \\\hline
262144 & 524288 &   1 & .17E+02 &       .10E--14 & .10E--05 \\\hline
\end{tabular}
\\\vspace{.125in}
Table~4: Five-step algorithm of Subsection~\ref{main_algorithm}
\end{center}
\end{figure}

\begin{figure}
\begin{center}
\begin{tabular}{r|r|c|r|r|r}
   $m$ &    $n$ & $i$ &     $t$ & $\sigma_{k+1}$ &   $\delta$ \\\hline
                                                                \hline
262144 & 524288 &   1 & .31E+02 &       .10E--02 &   .35E--02 \\\hline
262144 & 524288 &   1 & .31E+02 &       .10E--04 &   .15E--04 \\\hline
262144 & 524288 &   1 & .31E+02 &       .10E--06 &   .24E--05 \\\hline
262144 & 524288 &   1 & .31E+02 &       .10E--08 &   .11E--06 \\\hline
262144 & 524288 &   1 & .31E+02 &       .10E--10 &   .19E--08 \\\hline
262144 & 524288 &   1 & .31E+02 &       .10E--12 &   .25E--10 \\\hline
262144 & 524288 &   1 & .31E+02 &       .10E--14 &   .53E--11 \\\hline
\end{tabular}
\\\vspace{.125in}
Table~5: Five-step algorithm of Subsection~\ref{blanczos}
\end{center}
\end{figure}

\begin{figure}
\begin{center}
\begin{tabular}{r|r|r|r|r}
 $m$ &  $n$ &      $t$ & $\sigma_{k+1}$ & $\delta$ \\\hline
                                                     \hline
 512 & 1024 & .60E--01 &           .001 &    .0047 \\\hline
1024 & 2048 & .29E--00 &           .001 &    .0065 \\\hline
2048 & 4096 &  .11E+01 &           .001 &    .0092 \\\hline
4096 & 8192 &  .43E+01 &           .001 &    .0131 \\\hline
\end{tabular}
\\\vspace{.125in}
Table~6: Pivoted $QR$-decomposition
\end{center}
\end{figure}

\begin{figure}
\begin{center}
\rotatebox{-90}{\scalebox{.28}{\includegraphics{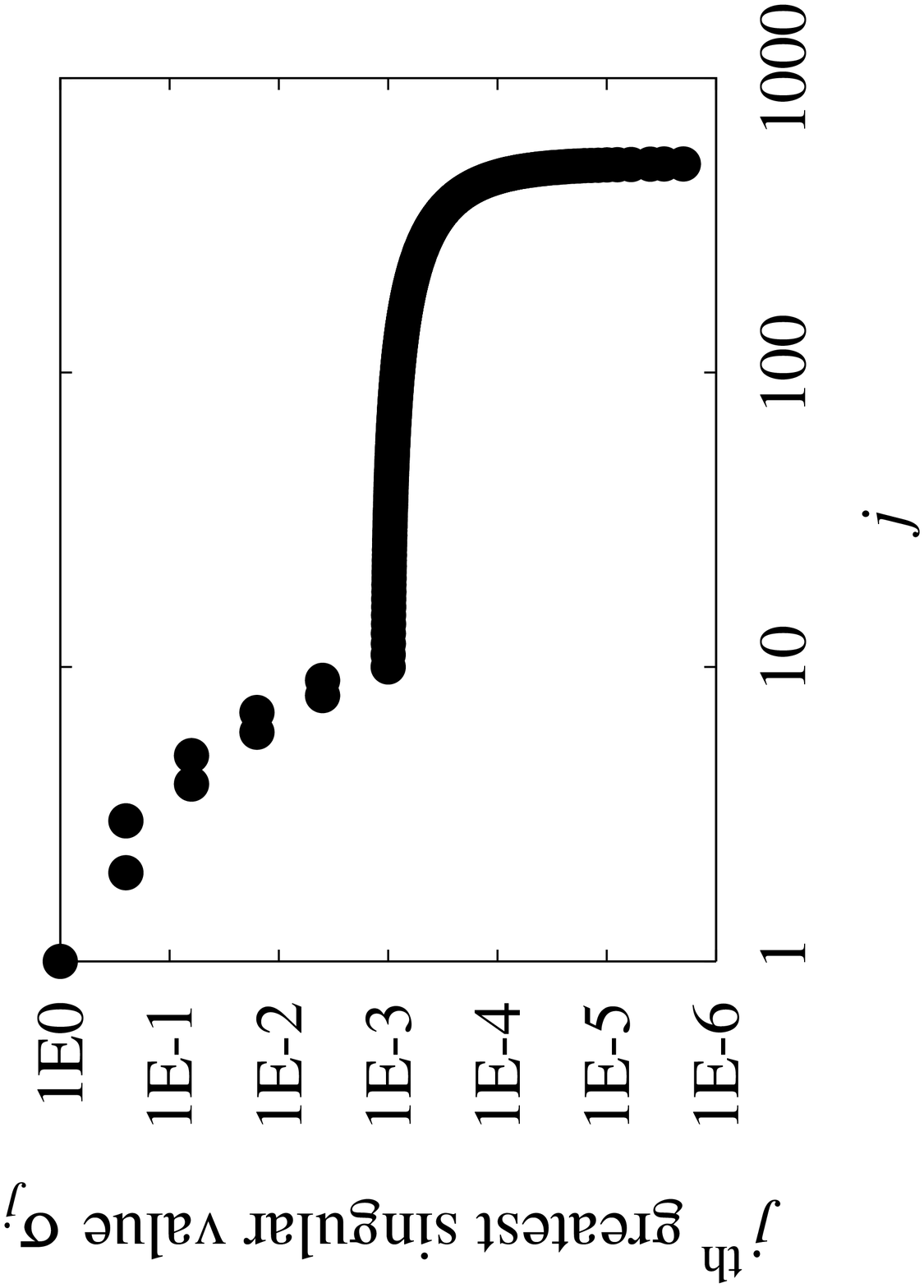}}}
\\\vspace{.15in}
Figure~1: Singular values with $m = 512$, $n = 1024$, \\
          and $\sigma_{k+1} = .001$
\end{center}
\end{figure}

\clearpage

\bibliographystyle{siam}
\bibliography{pca}

\end{document}